\newcommand{\blind}{0}
\newcommand{\sign}{\text{sign}}
\newtheorem{theorem}{Theorem}
\newtheorem{lemma}[theorem]{Lemma}
\theoremstyle{definition}
\theoremstyle{remark}
\newtheorem{remark}{Remark}
\theoremstyle{assumption}
\newtheorem{assumption}{Assumption}
\begin{document}

\def\spacingset#1{\renewcommand{\baselinestretch}%
{#1}\small\normalsize} \spacingset{1}


\if0\blind
{
  \title{\bf Smoothed Concordance-Assisted Learning for Optimal Treatment
Decision in High Dimensional Data}
  \author{Angzhi Fan \thanks{
    The author would like to thank Professor 
Wenbin Lu and Professor Rui Song for their kind guidance during the author's visit to North Carolina State University in 2017.}\hspace{.2cm}\\
    Department of Statistics, University of Chicago}
  \maketitle
} \fi

\if1\blind
{
  \bigskip
  \bigskip
  \bigskip
  \begin{center}
    {\LARGE\bf Smoothed Concordance-Assisted Learning for Optimal Treatment
Decision in High Dimensional Data}
\end{center}
  \medskip
} \fi

\bigskip
\begin{abstract}
Optimal treatment regime is the individualized treatment decision rule which yields the optimal treatment outcomes in expectation. A simple case of treatment decision rule is the linear decision rule, which is characterized by its coefficients and its threshold. As patients’ heterogeneity data accumulates, it is of interest to estimate the optimal treatment regime with a linear decision rule in high-dimensional settings. Single timepoint optimal treatment regime can be estimated using Concordance-assisted learning (CAL), which is based on pairwise comparison. CAL is flexible and achieves good results in low dimensions. However, with an indicator function inside it, CAL is difficult to optimize in high dimensions. Recently, researchers proposed a smoothing approach using a family of cumulative distribution functions to replace indicator functions. In this paper, we introduce smoothed concordance-assisted learning (SMCAL), which applies the smoothing method to CAL using a family of sigmoid functions. We then prove the convergence rates of the estimated coefficients by analyzing the approximation and stochastic errors for the cases when the covariates are continuous. We also consider discrete covariates cases, and establish similar results. Simulation studies are conducted, demonstrating the advantage of our method.
\end{abstract}

\noindent%
{\it Keywords:}  optimal treatment regime, precision medicine, smoothing approximation, monotonic single-index model
\vfill

\newpage
\spacingset{1.5} 
\section{Introduction}
\label{sec:intro}
In order to give precised treatment decisions, we have to
take patients' heterogeneity into account. Decision rules based
on patients' own features are called \emph{individualized
treatment rules}. In a binary treatment
decisions setting, $\mathcal{A}=\{1,0\}$ are the treatment indicators and
$$\{(Y_{i},X_{i},A_{i}),i=1,2,...,n\}$$ are the \emph{i.i.d.}
observations, where $Y_{i}$ and $X_{i}$ are the outcome and features of subject $i$.
Suppose $Y_{i}^{*}(A_{i})$ is the expected outcome for subject $i$
with features $X_{i}$ and treatment $A_{i}$. If $Y_{i}^{*}(1)>Y_{i}^{*}(0)$, treatment $1$ is more favorable to subject $i$ compared to
treatment $0$. Assuming the linear decision rule $d(X)=\mathbbm{1}(\beta^{T}X>c)$, a treatment regime is determined by the coefficients $\beta$ and
the threshold $c$. Mean treatment outcomes can be modeled as
$$
E[Y^{*}(d(X))]=E_{X}[E(Y|X,A=1)d(X)+E(Y|X,A=0)(1-d(X))].$$

An \emph{optimal treatment regime} is the
treatment regime which yields the most favorable mean treatment outcomes. Optimal treatment decision with \emph{multiple decision timepoints} is called \emph{optimal dynamic treatment regimes}.
There are currently two leading dynamic treatment learning approaches. The first one is Q-learning by \cite{Watkins:1989}. \cite{qlearning1992} proved the convergence property of Q-learning. \cite{song2011} extended Q-learning into penalized Q-learning (PQ-learning). The second one is Advantage Learning (A-learning) by \cite{murphy2003}. \cite{Blatt2004ALearning} showed that A-learning is likely to have smaller bias than Q-learning.

Due to the rapid accumulation of patients' heterogeneity data, people started to
consider the high dimensional setting. \cite{HQlearning} extended Q-learning to high dimensional Q-learning, with a focus on the two-stage dynamic treatment regimes. \cite{PAL} proposed Penalized A-learning (PAL), which applied the Dantzig selector under the A-learning framework, with a focus on two or three-stage settings.  
 
For a \emph{single timepoint} treatment decision, \cite{ipwe}
used a doubly robust augmented inverse probability weighted estimator
to handle the possible model misspecification issue. \cite{zhao2012} adopted the support vector machine framework and proposed outcome weighted learning (OWL). \cite{powl}
modified OWL to be penalized outcome weighted learning (POWL). \cite{lu2013} introduced a robust estimator which doesn't require
estimating the baseline function and is easy to perform variable selection in large dimensions. CAL by \cite{CAL} used the contrast between two treatments to construct a \emph{concordance function}, and then used
the concordance function to estimate the coefficients $\beta^{*}$. CAL has a fast convergence rate, and it does not assume that the contrast function is a linear combination of the features. They assumed the stable unit treatment value assumption (SUTVA) in
\cite{rubin1980},  
\begin{align}
\label{equ:sutva}
Y_{i}^{*}(A_{i})=\mathbbm{1}(A_{i}=0)Y_{i}^{*}(0)+\mathbbm{1}(A_{i}=1)Y_{i}^{*}(1),
\end{align}
and the no-unmeasured-confounders condition,  \begin{align}
\label{equ:nuc}
\{Y_{i}^{*}(0),Y_{i}^{*}(1)\}\bot A_{i}|X_{i}.
\end{align}
The proposed concordance function $C(\beta)$ is
\begin{align}
\label{eqa:c_beta}
C(\beta) & =E\{[\{Y_{i}^{*}(1)-Y_{i}^{*}(0)\}-\{Y_{j}^{*}(1)-Y_{j}^{*}(0)\}]\mathbbm{1}(\beta^{T}X_{i}>\beta^{T}X_{j})\}\nonumber \\
 & =E\{[D(X_{i})-D(X_{j})]\mathbbm{1}(\beta^{T}X_{i}>\beta^{T}X_{j})\},
\end{align}
where the \emph{contrast function} $D(X_{i})=E[Y_{i}|A_{i}=1,X_{i}]-E[Y_{i}|A_{i}=0,X_{i}]$.
Within the concordance function, an important assumption is that $\beta^{*T}X_{i}$
is concord with $D(X_{i})$, which means
\begin{align}
\label{equ:concord}
D(X_{i})=Q(\beta^{*T}X_{i}),
\end{align}
where $Q(x)$ is an unknown monotone increasing function. Using the \emph{propensity score} $\pi(X_{i})=P(A_{i}=1|X_{i})$,
their $D(X_{i})$ is estimated by its unbiased estimator 
$$w_{i}=\frac{\{Y_{i}-v(X_{i},\hat{\theta})\}\{A_{i}-\pi(X_{i})\}}{\pi(X_{i})\{1-\pi(X_{i})\}},$$
where $v(X_{i,}\hat{\theta})$ can be any arbitrary function independent
of $A_{i}$, while it is usually chosen as the mean response of those
patients who received treatment $0$. The true $\beta^{*}$ and the true threshold $c_{0}$ in the linear decision rule are estimated by 
\begin{equation}
\hat{\beta}=\mathop{\arg\max}_{\beta:\|\beta\|_{2}=1}\frac{1}{n(n-1)}{\displaystyle \sum_{i\neq j}(w_{i}-w_{j})\times}\mathbbm{1}(\beta^{T}x_{i}>\beta^{T}x_{j}),
\end{equation}
and 
\begin{equation}
\hat{c}=\mathop{\arg\max}_{c}\frac{1}{n}{\displaystyle \sum_{i=1}^{n}\frac{Y_{i}\mathbbm{1}[A_{i}=\mathbbm{1}(\hat{\beta^{T}}x_{i}>c)]}{A_{i}\pi(x_{i})+(1-A_{i})[1-\pi(x_{i})]}}.
\end{equation}

In the high dimensional setting, however, due to the indicator function inside the concordance function,
the optimization of CAL is somewhat
difficult. In order to deal with this optimization issue of CAL, \cite{scal} proposed SCAL, which used hinge loss to replace $\mathbbm{1}(x)$, and they added
the $L_1$ penalty for variable selection purpose. They estimate $\beta^{*}$ and $c_{0}$ by 
$$
\hat{\beta}=\mathop{\arg\min}_{\beta:\|\beta\|_{2}=1}\frac{2}{n(n-1)}{\displaystyle \sum_{w_i>w_j}(w_{i}-w_{j})\times}(1-\beta^{T}(x_{i}-x_{j}))_{+}+\lambda{\displaystyle \|\beta\|_1},
$$
and 
$$
\hat{c}=\mathop{\arg\max}_{c}\frac{1}{n}{\displaystyle \sum_{i=1}^{n}w_{i}\mathbbm{1}(\hat{\beta^{T}}x_{i}>c)},
$$
It achieved $n^{1/2}$-rate but induced a relatively large bias due to the difference between the hinge loss and the original 0-1 loss.

Our way to overcome the optimization difficulties in CAL under high dimensional data is based on the smoothing method proposed by \cite{hanfang}. A family of sigmoid functions are used to substitute the
indicator function in CAL. The $L_1$ penalty is also added. We then employ the coordinate descent algorithm for optimization. Our method is called Smoothed Concordance-assisted Learning (SMCAL). 

Compared to SCAL, our SMCAL has a slower convergence rate but achieves
a much smaller bias. Numerical comparisons with POWL, SCAL and PAL demonstrated
the advantage of our method, especially when the sample size is relatively
large. Other than the continuous covariates cases in the SCAL settings, we
expand our SMCAL to the discrete covariates cases.

There are three main differences between our work and \cite{hanfang}. Firstly, they focused on the Maximum
Rank Correlation (MRC) proposed in \cite{MRC}, but our smoothing is applied on the concordance function.
Secondly, their paper studied Monotone Transformation model, but we focus on Monotone Single Index Model, and we have a faster convergence rate. Thirdly, we assume weaker assumptions instead of their normality assumption posted on the covariates, and
we expand our method to discrete cases.

The rest of this paper is organized as follows: Section \ref{sec:meth} introduces
SMCAL and describes the coordinate descent algorithm. Section \ref{sec:error_bound} displays
the $L_{2}$-error rate in continuous cases and the $L_{1}$-error rate in discrete cases, which are the main results of our paper. In Section \ref{sec:simulation}, we conduct numerical comparisons with SCAL, PAL and POWL. A real data application to STAR{*}D dataset is provided in Section \ref{sec:real_data}. The proofs of lemmas and theorems in Section \ref{sec:error_bound} are left to the supplementary material.

\section{Methods}
\label{sec:meth}

\subsection{The Model Set-up}

We still assume the Assumptions (\ref{equ:sutva}), (\ref{equ:nuc}) and (\ref{equ:concord}) posted in \cite{CAL}. Our smoothing procedure approximates their concordance function by
\begin{equation}
C_{n}(\beta)=E\{[D(X_{i})-D(X_{j})]F(\beta^{T}(X_{i}-X_{j})\alpha_{n})\},
\end{equation}
where we use a family of sigmoid functions
\[
F(\alpha_{n}x)=\frac{1}{1+e^{-\alpha_{n}x}}
\]
to replace $\mathbbm{1}(x)$ in \cite{CAL}. $\alpha_{n}$ is a positive constant depending on $n$. An unbiased estimator of $C_n(\beta)$ is
\begin{equation}
\hat{C_{n}}(\beta)=\frac{1}{n(n-1)}{\displaystyle \sum_{i\neq j}(w_{i}-w_{j})\times}F(\beta^{T}(x_{i}-x_{j})\alpha_{n}).
\end{equation}
Without loss of generality, we can set $\beta_{1}^{*}=\beta_{1}=1$. Define
\[
\beta_{r(\gamma),\alpha_{n}}^{*}=\mathop{\arg\min}_{\beta:\beta_1=1,\|\beta-\beta^{*}\|_{2}\leq r(\gamma)}(-C_{n}(\beta))
\]
where $r(\gamma)$ is a constant to be chosen. $\beta_{r(\gamma),\alpha_{n}}^{*}$ will converge
to $\beta^{*}$ as $\alpha_n$ goes to infinity. 

Define $\mathcal{H}=\{\beta\in \mathbb{R}^{d}:\beta_{1}=1,\|\beta\|_{0}\leq s_{n},\|\beta-\beta^{*}\|_{2}\leq r(\gamma)\}$,
where $s_{n}$ is a constant so that $\|\beta_{r(\gamma),\alpha_{n}}^{*}\|_0\leq s_n$. Notice that maximizing $\hat{C}_n(\beta)$ is equivalent to maximizing 
$$
\frac{2}{n(n-1)}{\displaystyle \sum_{w_i>w_j}(w_{i}-w_{j})\times}F(\beta^{T}(x_{i}-x_{j})\alpha_{n}),
$$
we can write our loss function as
\begin{equation}
l_{n}(\beta)=-\frac{2}{n(n-1)}{\displaystyle \sum_{w_i>w_j}(w_{i}-w_{j})\times}F(\beta^{T}(x_{i}-x_{j})\alpha_{n})+\lambda_{n}\|\beta\|_{1}.
\end{equation}
Then we use the following two steps to estimate $\beta^{*}$ and $c$. 
\begin{equation}
\hat{\beta}_{r(\gamma),\alpha_{n}}=\mathop{\arg\min}_{\beta\in \mathcal{H}}{\displaystyle l_{n}(\beta)}
\end{equation}
\begin{equation}
\hat{c}=\mathop{\arg\max}_{c}\frac{1}{n}{\displaystyle \sum_{i=1}^{n}w_{i}\mathbbm{1}(\hat{\beta}_{r(\gamma),\alpha_{n}}^{T}x_{i}>c)}
\end{equation}

Our $\hat{\beta}_{r(\gamma),\alpha_{n}}$ is different from $\beta_{r(\gamma),\alpha_{n}}^{*}$
due to the stochastic factors and the $L_1$ penalty. We will call $\beta_{r(\gamma),\alpha_{n}}^{*}-\beta^{*}$ the \emph{approximation error} and $\hat{\beta}_{r(\gamma),\alpha_{n}}-\beta_{r(\gamma),\alpha_{n}}^{*}$ the \emph{stochastic error}. 

\subsection{Coordinate Descent Algorithm}

We can iteratively use proximal gradient descent on each coordinate to solve the optimization problem. Step size $t$ is a fixed constant.
\FloatBarrier
\begin{algorithm}
\caption{Coordinate Descent Algorithm}
\begin{algorithmic}[1]
\STATE Initialize $\beta=(1,0,...,0)^{T}$ or $\beta=(-1,0,...,0)^{T}$. 

Repeat the following steps until converge:

For $k=2,3,\ldots,d$:

\STATE Calculate $g_k=\frac{-\alpha}{n(n-1)}{\displaystyle \sum_{i\neq. j}(w_{i}-w_{j})(x_{i,k}-x_{j,k})F^{'}((x_{i}-x_{j})^{T}\beta\alpha)}$.

\STATE Update $\beta_k=S_{t\lambda}(\beta_k-t g_k)$, where
$S_{t\lambda}(x)=$ $\begin{cases}
x-t\lambda & x>t\lambda\\
0 & -t\lambda\leq x \leq t\lambda\\
x+t\lambda & x<-t\lambda
\end{cases}$ is the soft-thresholding operator.
\end{algorithmic}
\label{alg:coor_des}
\end{algorithm}
Consider $-\frac{2}{n(n-1)}{\displaystyle \sum_{w_i>w_j}(w_{i}-w_{j})}F((x_{i}-x_{j})^{T}\beta \alpha_{n})$. When a pair $(i,j)$ satisfies $w_i>w_j$, likely we have $D(X_{i})>D(X_{j})$, which means $(x_{i}-x_{j})^{T}\beta^{*}>0$. Because $-F(x)$ is convex when $x>0$, we know that our optimization problem is likely to be convex when $\beta$ is close to $\beta^*$. $\lambda$ and $\alpha$ can be chosen by cross-validation.

\section{Convergence Rates of the Estimated Coefficients}
\label{sec:error_bound}

In this section, we establish the error rates for $\hat{\beta}_{r(\gamma),\alpha_{n}}-\beta^{*}$
under continuous and then discrete covariates cases. Let $\lambda_{n}\geq2||\nabla\hat{L}_{n}(\beta_{r,\alpha_{n}}^{*})||_{\infty}$. Assume the sparsity of the real coefficients
$\beta^{*}$. Let $S$ represent the nonzero indexes in $\beta^{*}$, $s=|S|$, and define $S^{c}=\{1,2,...d\}\setminus S$. Let $X_{i,I}$ represent all the entries of $X_{i}$ whose indexes belong to set $I$. The notation $\lesssim$ means that if $a_{n}\lesssim b_{n}$,
or equivalently $b_{n}\gtrsim a_{n}$, then there exist $C>0$ and $N_{0}>0$, such that for all $n>N_{0}$, we have $|a_{n}|\leq C |b_{n}|$. If $a_{n}\lesssim b_{n}$ and $b_{n}\lesssim a_{n}$, then we write $a_{n}\asymp b_{n}$.
\subsection{Continuous Cases}
\label{ssec:cts}
\subsubsection{Approximation Error}
\label{sssec:approximation}
To analyze the approximation error, we assume the following assumptions.
\begin{assumption}[]\label{asm:approx}$ $

(A1) $Q(x)$ is twice differentiable
and $|Q(x)|<M_{1}$.

(A2) For all $\beta:\:\beta_1=1,||\beta-\beta^{*}||_{2}\leq r(\gamma)$, and $t\in \mathbb{R}$, we have $|\frac{\partial}{\partial t}f_{X^{T}\beta}(t)|<M_{3}$.

(A3) Let $\Gamma_{(d-1)\times(d-1)}=-\nabla^{2}C(\beta^{*})$, where the derivative is taken with respect to every coordinate except the first one because $\beta_1^*=1$ is fixed. Assume $0<c_{3}\leq\lambda_{min}(\Gamma)\leq\lambda_{max}(\Gamma)\leq c_{4}$. 
\end{assumption}

\begin{remark}
(A2) is true for many distributions of $X$. For example, it's easy
to show that if $X\sim \mathcal{N}(\mu,\Sigma_{d\times d})$ and $0<c_{1}\leq\lambda_{min}(\Sigma)$, then (A2) is satisfied. 
\end{remark}

\begin{remark}
Since $\beta^*$ is the maximizer of $C(\beta)$, $\Gamma$ is non-negative definite. Assumption similar to (A3) also posed in Section 4.1.1,
\cite{hanfang}. But in the discrete cases, we will use another way to build the upper error bound and no longer need this assumption. 
\end{remark}

Under the above assumptions, the following Lemma \ref{lem:c_beta_approx_rate} measures the closeness between the concordance function $C(\beta)$ and its approximation $C_{n}(\beta)$.

\begin{lemma}
\label{lem:c_beta_approx_rate}
\begin{equation}
\sup_{\beta\in\mathbb{R}^{d}:\beta_1=1,\|\beta^{*}-\beta\|_{2}\leq r(\gamma)}|C(\beta)-C_{n}(\beta)|\lesssim\alpha_{n}^{-2}
\end{equation}
\end{lemma}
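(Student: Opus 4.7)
The plan is to reduce $C(\beta) - C_n(\beta)$ to a one-dimensional integral in $W := \beta^T(X_i - X_j)$ and extract the $\alpha_n^{-2}$ rate from an odd-symmetry cancellation at $w = 0$. Setting $g(w) := \mathbbm{1}(w > 0) - F(\alpha_n w)$ and $h(w;\beta) := E[D(X_i) - D(X_j) \mid W = w] f_W(w)$,
\begin{equation*}
C(\beta) - C_n(\beta) = E[(D(X_i) - D(X_j)) g(W)] = \int_{\mathbb{R}} g(w) h(w;\beta) dw.
\end{equation*}
Because $X_i$ and $X_j$ are iid, swapping them sends $W \mapsto -W$ and $D(X_i) - D(X_j) \mapsto -(D(X_i) - D(X_j))$, so $h(\cdot;\beta)$ is odd in $w$. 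The identity $F(-x) = 1 - F(x)$ makes $g$ odd too. In particular $h(0;\beta) = 0$.

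I would then show that, for every $\beta \in \mathcal{B} := \{\beta : \beta_1 = 1, \|\beta - \beta^*\|_2 \leq r(\gamma)\}$, the function $w \mapsto h(w;\beta)$ is $C^2$ with $\|h''(\cdot;\beta)\|_\infty$ and $|h'(0;\beta)|$ bounded uniformly in $\beta$. A convenient representation is
\begin{equation*}
h(w;\beta) = \int [\mu_\beta(u+w) - \mu_\beta(u)] f_U(u+w) f_U(u) du, \qquad \mu_\beta(u) := E[Q(\beta^{*T}X) \mid \beta^T X = u],
\end{equation*}
which transfers the smoothness of $Q$ (from (A1)) and of $f_U := f_{\beta^T X}$ (from (A2)) directly to $h$; the compactness of $\mathcal{B}$ together with continuous $\beta$-dependence of the integrand yields uniformity of the derivative bounds.

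Writing the second-order Taylor expansion $h(w;\beta) = h'(0;\beta) w + R(w;\beta)$ with $|R(w;\beta)| \leq \frac{1}{2} \|h''(\cdot;\beta)\|_\infty w^2$, and changing variables $v = \alpha_n w$, we get
\begin{equation*}
\int g(w) h(w;\beta) dw = \frac{h'(0;\beta)}{\alpha_n^2} \int g_0(v) v dv + O(\alpha_n^{-3}), \qquad g_0(v) := \mathbbm{1}(v > 0) - F(v),
\end{equation*}
since the remainder contributes $\int g(w) R(w;\beta) dw = O(\alpha_n^{-3})$ via $\int |g(w)| w^2 dw = \alpha_n^{-3} \int |g_0(v)| v^2 dv$. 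As $|g_0(v)| \leq e^{-|v|}$, all moment integrals $\int |g_0(v)| |v|^k dv$ are finite; taking the supremum over $\beta \in \mathcal{B}$ and using the uniform bound on $|h'(0;\beta)|$ yields $\sup_{\beta \in \mathcal{B}} |C(\beta) - C_n(\beta)| \lesssim \alpha_n^{-2}$.

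The main obstacle is the middle step: verifying from the stated assumptions that $h(\cdot;\beta) \in C^2$ with $\|h''(\cdot;\beta)\|_\infty$ bounded uniformly over $\beta \in \mathcal{B}$. Assumption (A2) provides one bounded derivative of $f_U$ (inherited by the self-convolution $f_W$), and (A1) provides two bounded derivatives of $Q$; the remaining work is to push these through the conditional expectation $\mu_\beta$ and differentiate under the integral in the displayed representation of $h$, while tracking the $\beta$-dependence to ensure the bound is uniform. Once that is established, the odd-symmetry vanishing $h(0;\beta) = 0$ delivers the claimed $\alpha_n^{-2}$ rate.
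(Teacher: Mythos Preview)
Your reduction to the one-dimensional integral $\int g(w)\,h(w;\beta)\,dw$ and the use of the odd symmetry $h(0;\beta)=0$ is exactly the structure of the paper's argument; the function you call $h$ is the paper's $W(v,\beta)$. The gap is in what you then ask of $h$: you aim to show $h(\cdot;\beta)\in C^2$ with $\|h''\|_\infty$ uniformly bounded, so that the Taylor remainder is $O(\alpha_n^{-3})$. The stated assumptions do not deliver this. Assumption (A1) only says $|Q|\le M_1$ and that $Q$ is twice differentiable---it does \emph{not} say $Q'$ or $Q''$ are bounded---and (A2) controls only one derivative of $f_{X^T\beta}$. If you rewrite your representation via the shift $s=u+w$ as
\[
h(w;\beta)=\int \mu_\beta(s)\,f_U(s)\,[f_U(s-w)-f_U(s+w)]\,ds,
\]
then differentiating twice in $w$ produces $f_U''$, which is nowhere controlled; and there is no reason for $\mu_\beta$ to be smooth either. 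So the ``main obstacle'' you flag is in fact not surmountable from (A1)--(A2) as written.

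The fix---and this is what the paper does---is to drop the $C^2$ ambition and extract only a first-order bound $|h(w;\beta)|\le C|w|$. From the shifted representation above, a single $w$-derivative lands only on $f_U$, so $|h'(w;\beta)|\le 2M_3\int|\mu_\beta(s)|f_U(s)\,ds\le 2M_1M_3$ uniformly in $\beta$, and combined with $h(0;\beta)=0$ this gives $|h(w;\beta)|\le 2M_1M_3|w|$. Then
\[
|C(\beta)-C_n(\beta)|\le \int |g(w)|\,|h(w;\beta)|\,dw\le 2M_1M_3\int|w|\,|g(w)|\,dw=2M_1M_3\,\alpha_n^{-2}\int|v|\,|g_0(v)|\,dv,
\]
which is the claimed rate. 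Your explicit leading coefficient $h'(0;\beta)\alpha_n^{-2}\int g_0(v)v\,dv$ is a sharper statement than the lemma requires, and it costs you a regularity hypothesis that is not available.
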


The convexity assumption (A3) in Assumption \ref{asm:approx} implies that
$\|\beta_{r(\gamma),\alpha_{n}}^{*}-\beta^{*}\|_{2}^{2}$ and $|C(\beta^{*})-C(\beta_{r(\gamma),\alpha_{n}}^{*})|$
are of the same rate, and thus the approximation error can be bounded through analyzing $|C(\beta^{*})-C(\beta_{r(\gamma),\alpha_{n}}^{*})|$, where we can apply Lemma \ref{lem:c_beta_approx_rate} and get Theorem \ref{thm:approx_rate}.

\begin{theorem}[Approximation Error Rate]\label{thm:approx_rate}
\begin{equation}
\|\beta_{r(\gamma),\alpha_{n}}^{*}-\beta^{*}\|_{2}^{2}\lesssim\alpha_{n}^{-2}
\end{equation}
\end{theorem}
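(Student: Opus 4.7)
The plan is to chain together three facts: (i) $\beta^*_{r(\gamma),\alpha_n}$ is optimal for $C_n$ over the restricted ball, (ii) $C_n$ is uniformly $O(\alpha_n^{-2})$-close to $C$ on that ball by Lemma~\ref{lem:c_beta_approx_rate}, and (iii) $C$ is locally strongly concave at $\beta^*$ by Assumption~(A3). Put together, (i) and (ii) give a sub-optimality bound of the form $C(\beta^*) - C(\beta^*_{r(\gamma),\alpha_n}) \lesssim \alpha_n^{-2}$, and (iii) lets us trade this suboptimality for a quadratic bound on the parameter error.

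First I would write out the sandwich argument. Since $\beta^*$ lies in the feasible set (take $r(\gamma)$ small but positive) and $\beta^*_{r(\gamma),\alpha_n}$ maximizes $C_n$ there, we have $C_n(\beta^*_{r(\gamma),\alpha_n}) \geq C_n(\beta^*)$. Decomposing
\[
C(\beta^*) - C(\beta^*_{r(\gamma),\alpha_n}) = \bigl[C(\beta^*)-C_n(\beta^*)\bigr] + \bigl[C_n(\beta^*)-C_n(\beta^*_{r(\gamma),\alpha_n})\bigr] + \bigl[C_n(\beta^*_{r(\gamma),\alpha_n})-C(\beta^*_{r(\gamma),\alpha_n})\bigr],
\]
the middle bracket is $\leq 0$ and the remaining two brackets are each $O(\alpha_n^{-2})$ by Lemma~\ref{lem:c_beta_approx_rate}. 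Therefore $0 \leq C(\beta^*) - C(\beta^*_{r(\gamma),\alpha_n}) \lesssim \alpha_n^{-2}$.

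Next I would turn this value-gap bound into a parameter-gap bound using (A3). Because $\beta^*$ maximizes $C$, the gradient (with respect to the free coordinates $\beta_2,\ldots,\beta_d$) vanishes at $\beta^*$, and a second-order Taylor expansion gives
\[
C(\beta^*) - C(\beta) = \tfrac{1}{2}(\beta-\beta^*)^T \Gamma (\beta-\beta^*) + o\bigl(\|\beta-\beta^*\|_2^2\bigr)
\]
as $\beta\to\beta^*$. Since the first coordinate is fixed at $1$ for both $\beta^*$ and $\beta^*_{r(\gamma),\alpha_n}$, the difference lives in the $(d-1)$-dimensional space on which $\Gamma$ acts, so $\lambda_{\min}(\Gamma)\geq c_3$ yields the lower bound $C(\beta^*) - C(\beta^*_{r(\gamma),\alpha_n}) \gtrsim c_3\|\beta^*_{r(\gamma),\alpha_n}-\beta^*\|_2^2$ once $\beta^*_{r(\gamma),\alpha_n}$ is close enough to $\beta^*$ to absorb the higher-order remainder. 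Combining with the upper bound from the previous paragraph delivers the claim.

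The main obstacle, as I see it, is justifying that the Taylor remainder is genuinely negligible, i.e. that $\beta^*_{r(\gamma),\alpha_n}$ really does lie in a shrinking neighborhood where the quadratic term dominates. A clean way is a bootstrap: for $\alpha_n$ large enough the feasible radius $r(\gamma)$ shrinks the Taylor remainder below, say, $\tfrac{c_3}{4}\|\beta-\beta^*\|_2^2$ uniformly on $\|\beta-\beta^*\|_2\leq r(\gamma)$ (this uses continuity of $\nabla^2 C$, which follows from the smoothness in (A1)--(A2)), so $C(\beta^*) - C(\beta) \geq \tfrac{c_3}{4}\|\beta-\beta^*\|_2^2$ throughout the ball. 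One then plugs in $\beta = \beta^*_{r(\gamma),\alpha_n}$, divides through, and concludes $\|\beta^*_{r(\gamma),\alpha_n}-\beta^*\|_2^2\lesssim \alpha_n^{-2}$. Verifying this uniform quadratic lower bound, together with ensuring the optimum is interior to the ball for all large $n$, is the only technical point beyond the two ingredients already in hand.
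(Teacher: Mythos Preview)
Your proposal is correct and follows essentially the same route as the paper: bound $C(\beta^*)-C(\beta^*_{r(\gamma),\alpha_n})$ by the uniform approximation error from Lemma~\ref{lem:c_beta_approx_rate} via a sandwich/triangle-inequality argument using the respective optimalities of $\beta^*$ and $\beta^*_{r(\gamma),\alpha_n}$, then convert the value gap to a parameter gap through the local quadratic lower bound furnished by (A3). The paper handles the Taylor-remainder issue in the same way you suggest, by choosing $r(\gamma)$ small enough that $\gamma_1\lambda_{\min}(\Gamma)\|\beta-\beta^*\|_2^2 \le C(\beta^*)-C(\beta)$ holds uniformly on the ball; your explicit discussion of this point is, if anything, more careful than the paper's.
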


\begin{remark}
\label{rmk:approx_rate}
The approximation error in \cite{hanfang} is 
\[
\|\beta_{r(\gamma),\alpha_{n}}^{*}-\beta^{*}\|_{2}^{2}\lesssim\alpha_{n}^{-1}\cdot\sup_{\beta:\beta_{1}=1}\frac{1}{\sqrt{2\beta^{T}\varSigma\beta}}.
\]
We have a faster convergence rate because we use the concordance function estimator 
\[
\frac{1}{n(n-1)}{\displaystyle \sum_{i\neq j}(w_{i}-w_{j})\times}\mathbbm{1}(\beta^{T}x_{i}>\beta^{T}x_{j})
\]
which is different from the Maximum Rank Correlation (MRC) estimator 
\[
\frac{1}{n(n-1)}{\displaystyle \sum_{i\neq j}\sign(w_{i}-w_{j})\times}\sign(\beta^{T}x_{i}-\beta^{T}x_{j})
\]
in their paper.
\end{remark}

\subsubsection{Stochastic Error}
\label{sssec:stochastic}
When investigating the stochastic error $\hat{\beta}_{r(\gamma),\alpha_{n}}-\beta_{r(\gamma),\alpha_{n}}^{*}$, we follow the framework of
Section 4.2 in \cite{hanfang}. In order to be consistent with their notations, define 
\[
\hat{L}_{n}(\beta)=-\hat{C}_{n}(\beta)=-\frac{1}{n(n-1)}{\displaystyle \sum_{i\neq j}(w_{i}-w_{j})\times}F(\beta^{T}(X_{i}-X_{j})\alpha_{n}),
\]
\[
L(\beta)=-C(\beta),\quad L_{n}(\beta)=-C_{n}(\beta),
\]
\[
\hat{\Delta}=\hat{\beta}_{r(\gamma),\alpha_{n}}-\beta_{r(\gamma),\alpha_{n}}^{*},
\]
and $P(\cdot)$ denotes $L_{1}$ Penalty, $P^{*}(\cdot)$ is the dual
norm of $P(\cdot)$. The following assumptions are needed in this subsection.

\begin{assumption}
\label{asm:stochastic}$ $

(A4)$|w_{i}|\leq M_{2}$. 

(A5)$\exists M_{4}$,
$s.t.$ $-M_{4}\leq\|X_{i}\|_{\infty}\leq M_{4}$. 

(A6)$\|\beta_{r(\gamma),\alpha_{n}}^{*}||_{0}\leq s_{n}$.
\end{assumption}

\begin{remark}
Our assumption (A4) is very similar to (A1) in Assumption \ref{asm:approx}, but (A1) doesn't include any stochastic factors.
\end{remark}

\begin{remark}
Although it's reasonable for us to post (A5) in our optimal treatment
decision problem, (A5) inevitably eliminates the multivariate normal
distribution. 
\end{remark}

\begin{remark}
(A6) appeared as Assumption (A0) in Section 4.2.2,
\cite{hanfang}. As we will see in the proof the Theorem \ref{thm:discrete_rate}, under some conditions, we have $\beta_{r(\gamma),\alpha_{n},S^c}^{*}=\Vec{0}$. So $s_{n}$ can be relatively small. 
\end{remark}

We provide two necessary steps, Lemma \ref{lem:stochastic_inf_norm} and \ref{lem:stochastic_error_rate}, in order
to apply Theorem 4.8 in \cite{hanfang}. Lemma \ref{lem:stochastic_inf_norm} is
used to bound the term $\langle\nabla\hat{L}_{n}(\beta_{r(\gamma),\alpha_{n}}^{*}),\Delta\rangle$.

\begin{lemma}
\label{lem:stochastic_inf_norm}
\begin{equation}
\|\nabla\hat{L}_{n}(\beta_{r(\gamma),\alpha_{n}}^{*})\|_{\infty}\stackrel{p}{\lesssim}\alpha_{n}\sqrt{\frac{log(d)}{n}}
\end{equation}
\end{lemma}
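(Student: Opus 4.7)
My plan is to recognize each coordinate of $\nabla \hat{L}_n(\beta)$ as a bounded second-order U-statistic (up to a factor of $\alpha_n$), and then apply Hoeffding's inequality for U-statistics together with a union bound over the $d$ coordinates. Direct differentiation gives, for $k = 2, \dots, d$,
\begin{equation*}
[\nabla \hat{L}_n(\beta)]_k = -\frac{\alpha_n}{n(n-1)} \sum_{i\neq j} (w_i - w_j)(X_{i,k} - X_{j,k})\, F'\bigl(\alpha_n (X_i - X_j)^T \beta\bigr).
\end{equation*}
Setting $Z_i=(W_i,X_i)$ and $h_k(Z_i,Z_j;\beta) = (w_i - w_j)(X_{i,k} - X_{j,k})\,F'(\alpha_n(X_i-X_j)^T\beta)$, the evenness of $F'$ makes $h_k$ symmetric in $(Z_i,Z_j)$, so $[\nabla\hat{L}_n(\beta)]_k = -\alpha_n U_{n,k}(\beta)$ with $U_{n,k}(\beta)$ a U-statistic of order two.

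Next, I would use (A4) and (A5) to get $|w_i - w_j|\leq 2M_2$ and $|X_{i,k}-X_{j,k}|\leq 2M_4$, combined with the uniform sigmoid bound $|F'|\leq 1/4$, to obtain $|h_k|\leq M_2 M_4$. Hoeffding's inequality for bounded U-statistics then yields, for each fixed $k$,
\begin{equation*}
P\bigl(|U_{n,k}(\beta^*_{r,\alpha_n}) - E\, U_{n,k}(\beta^*_{r,\alpha_n})|\geq t\bigr) \leq 2\exp\!\Bigl(-\tfrac{\lfloor n/2\rfloor\,t^2}{2 M_2^2 M_4^2}\Bigr).
\end{equation*}
Choosing $t\asymp\sqrt{\log(d)/n}$ with a sufficiently large constant and union-bounding over $k=2,\dots,d$ controls the centered maximum by $\sqrt{\log(d)/n}$ with probability tending to one. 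The mean term is then removed as follows: since $\hat L_n$ is unbiased for $L_n$, one has $E\,U_{n,k}(\beta)=-\alpha_n^{-1}\partial_k L_n(\beta)$; by Theorem \ref{thm:approx_rate}, $\beta^*_{r(\gamma),\alpha_n}$ lies strictly inside the ball $\{\|\beta-\beta^*\|_2\leq r(\gamma)\}$ for $\alpha_n$ large enough, so it is an interior minimizer of $L_n$ in the free coordinates $\beta_2,\dots,\beta_d$, and the first-order conditions give $\partial_k L_n(\beta^*_{r,\alpha_n})=0$ for $k\geq 2$ (matching the convention in (A3) of taking derivatives in the $d-1$ free coordinates). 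Multiplying the concentration bound by $\alpha_n$ then delivers the stated rate.

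The main obstacle, in my view, is obtaining the uniform kernel bound: because the argument rests on Hoeffding's inequality, it relies crucially on the hard envelopes (A4) and (A5), which, as the remark following (A5) notes, rule out sub-Gaussian covariates such as the multivariate normal. Relaxing (A5) would require either truncating $X$ and adding a tail term, or invoking a Bernstein-type U-statistic inequality with a variance estimate for $h_k$ in place of its $\ell_\infty$ bound.
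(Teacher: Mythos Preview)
Your proposal is correct and follows essentially the same route as the paper: bound each coordinate's kernel by $M_2 M_4$ (or $\alpha_n M_2 M_4$ before factoring) via (A4), (A5) and $|F'|\le 1/4$, apply Hoeffding's inequality for U-statistics, and union-bound over the $d-1$ free coordinates with $t\asymp \alpha_n\sqrt{\log(d)/n}$. Your justification that the mean vanishes because $\beta^*_{r(\gamma),\alpha_n}$ is an interior minimizer of $L_n$ in the free coordinates is exactly the paper's one-line observation ``$Eh_k=0$ since $\beta^*_{r(\gamma),\alpha_n}$ is the local maximizer of $C_n$''.
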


In \cite{hanfang}, they introduced a parameter $\kappa_{n}=E\mathop{\sup_{\beta\in\mathcal{H}}}|\hat{L}_{n}(\beta)-L_{n}(\beta)|$
and stated that $\kappa_{n}=O(n^{-1/2})$ when $d$ is fixed. However,
in our case $d$ can be much larger than $n$, so we will use a different
approach, which is our Lemma \ref{lem:stochastic_error_rate}, to analyze the rate of $\mathop{\sup_{\beta\in\mathcal{H}}}|\hat{L}_{n}(\beta)-L_{n}(\beta)|$. Define
$$B(h)=|\hat{L}_{n}(\beta_{r(\gamma),\alpha_{n}}^{*}-h)-L_{n}(\beta_{r(\gamma),\alpha_{n}}^{*}-h)-\hat{L}_{n}(\beta_{r(\gamma),\alpha_{n}}^{*})+L(\beta_{r(\gamma),\alpha_{n}}^{*})|,$$
the following Lemma \ref{lem:stochastic_error_rate} provides a bound for $B(h)$ in high probability.

\begin{lemma}
\label{lem:stochastic_error_rate}
\begin{equation}
{\displaystyle \sup_{\|h\|_{0}\leq2s_{n},\|h\|_{2}\leq2r(\gamma)}\frac{B(h)}{\|h\|_{2}}\stackrel{p}{\lesssim}\frac{\sqrt{s_{n}^{2}log(d)}}{\sqrt{n}}\alpha_{n}}
\end{equation}
\end{lemma}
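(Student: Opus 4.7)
The plan is to bound $B(h)$ by a first-order Taylor-type argument on the centered empirical process $\hat{L}_n - L_n$ along the segment from $\beta_{r(\gamma),\alpha_n}^*$ to $\beta_{r(\gamma),\alpha_n}^* - h$, and then to control the resulting gradient process uniformly using a sparse covering argument. (I will treat $B(h)$ as if the trailing $L(\beta_{r(\gamma),\alpha_n}^*)$ is $L_n(\beta_{r(\gamma),\alpha_n}^*)$; the discrepancy is a deterministic constant of order $\alpha_n^{-2}$ by Lemma~\ref{lem:c_beta_approx_rate}, which is negligible and anyway does not depend on $h$.)

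First, by the fundamental theorem of calculus applied to $t\mapsto (\hat{L}_n-L_n)(\beta_{r(\gamma),\alpha_n}^* - th)$,
\begin{equation*}
B(h)\;\le\;\int_0^1\bigl|\langle\nabla(\hat{L}_n-L_n)(\beta_{r(\gamma),\alpha_n}^*-th),\,h\rangle\bigr|\,dt.
\end{equation*}
Since $\|h\|_0\le 2s_n$, Cauchy--Schwarz on the active set of $h$ gives $|\langle v,h\rangle|\le \|v_{\mathrm{supp}(h)}\|_2\|h\|_2\le \sqrt{2s_n}\,\|v\|_\infty\|h\|_2$. Hence
\begin{equation*}
\frac{B(h)}{\|h\|_2}\;\le\;\sqrt{2s_n}\,\sup_{\beta\in\mathcal{B}}\|\nabla(\hat{L}_n-L_n)(\beta)\|_\infty,
\end{equation*}
where $\mathcal{B}=\{\beta:\|\beta-\beta_{r(\gamma),\alpha_n}^*\|_0\le 2s_n,\|\beta-\beta_{r(\gamma),\alpha_n}^*\|_2\le 2r(\gamma)\}$ contains every point $\beta_{r(\gamma),\alpha_n}^*-th$ encountered in the integral.

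Second, I would show $\sup_{\beta\in\mathcal{B}}\|\nabla(\hat{L}_n-L_n)(\beta)\|_\infty\stackrel{p}{\lesssim}\alpha_n\sqrt{s_n\log d/n}$. For fixed $\beta$ and coordinate $k$, $\partial_k\hat{L}_n(\beta)$ is a U-statistic in $\{(X_i,w_i)\}$ with kernel $-\alpha_n(w_i-w_j)(x_{i,k}-x_{j,k})F'((x_i-x_j)^T\beta\alpha_n)$, which is bounded by $4\alpha_n M_2M_4\|F'\|_\infty$ under Assumptions~(A4) and (A5). Hoeffding's inequality for bounded U-statistics then gives a subgaussian tail of scale $\alpha_n/\sqrt{n}$, and a union bound over the $d$ coordinates produces a bound on $\|\cdot\|_\infty$. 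To make this uniform over $\beta\in\mathcal{B}$, I would build an $\ell_1$-net of $\mathcal{B}$ of mesh $\epsilon$; because elements of $\mathcal{B}$ are at most $3s_n$-sparse and $\ell_2$-bounded, standard sparse covering bounds give $\log N(\epsilon,\mathcal{B},\|\cdot\|_1)\lesssim s_n\log(d/\epsilon)$. Since $F''$ is bounded, the map $\beta\mapsto \nabla\hat{L}_n(\beta)$ is Lipschitz with constant $\lesssim\alpha_n^2$ in the $\ell_1$ norm (coordinate-wise), and the same holds for $\nabla L_n$. Choosing $\epsilon\asymp \alpha_n^{-1}\sqrt{\log d/n}$ balances the two contributions and yields the claimed uniform gradient bound.

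Combining the two pieces produces
\begin{equation*}
\frac{B(h)}{\|h\|_2}\;\stackrel{p}{\lesssim}\;\sqrt{s_n}\cdot\alpha_n\sqrt{\frac{s_n\log d}{n}}\;=\;\alpha_n\sqrt{\frac{s_n^2\log d}{n}},
\end{equation*}
as required. The main obstacle is the uniform concentration of the gradient U-process over the sparse ball $\mathcal{B}$: one must carefully choose the mesh of the $\ell_1$-net so that the $\alpha_n^2$-Lipschitz remainder does not dominate, and verify that the cover-size logarithm only inflates the stochastic term by $\sqrt{s_n}$. The two $\sqrt{s_n}$ factors that compose into the final $\sqrt{s_n^2}$ are structurally different: one arises from turning an $\ell_\infty$ bound on the gradient into a bound on $\langle\nabla,h\rangle$ via the $2s_n$-sparsity of $h$, and the other from the entropy of the sparse neighborhood over which the gradient must be uniformized.
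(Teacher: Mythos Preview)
Your approach is correct but differs substantially from the paper's. The paper never differentiates: it treats $B(h)$ directly as a centered U-statistic whose kernel is bounded by $C\alpha_n\sqrt{s_n}\|h\|_2$ (via $|F(a)-F(b)|\le\tfrac14|a-b|$ together with (A4)--(A5) and the $2s_n$-sparsity of $h$), applies Hoeffding's inequality for U-statistics to $B(h)$ pointwise in $h$, and then passes to the supremum by covering the sparse \emph{sphere} $\{\|h\|_0\le 2s_n,\|h\|_2=2r(\gamma)\}$ and separately discretizing the radius; two Lipschitz estimates---one angular using $F'$, one radial using $F''$---stitch the discrete net back to the full set. Your route instead linearizes $B(h)$ through the gradient, reduces everything to a uniform $\ell_\infty$ bound on $\nabla(\hat L_n-L_n)$ over the sparse ball $\mathcal{B}$, and makes the two $\sqrt{s_n}$ factors come from logically distinct places (sparsity of $h$ in the pairing $\langle\nabla,h\rangle$, versus the entropy of $\mathcal{B}$). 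Your argument is cleaner and more modular, and avoids the somewhat ad hoc sphere-then-radius covering; the paper's argument has the minor advantage of working at the level of the scalar process $B(h)$ rather than a vector-valued gradient process. Both ultimately rely on $\|F''\|_\infty<\infty$ for the Lipschitz step. One small caveat: with your choice $\epsilon\asymp\alpha_n^{-1}\sqrt{\log d/n}$, the quantity $\log(d/\epsilon)$ picks up an additive $\log(\alpha_n\sqrt{n})$ term, harmless only under a mild side condition like $\alpha_n n\lesssim d^{C}$; the paper's covering step faces and silently absorbs the same issue.
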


$B(h)$ is used to bound the difference between $\hat{L}_{n}(\beta_{r(\gamma),\alpha_{n}}^{*}+\hat{\Delta})-\hat{L}_{n}(\beta_{r(\gamma),\alpha_{n}}^{*})$ and $L_{n}(\beta_{r(\gamma),\alpha_{n}}^{*}-\hat{\Delta})-L_{n}(\beta_{r(\gamma),\alpha_{n}}^{*})$. By using an argument similar to Theorem 4.8 and Lemma 4.10 in \cite{hanfang}, we can prove the stochastic error rate stated in the following theorem.

\begin{theorem}[Stochastic Error Rate]\label{thm:stochastic_rate}
\begin{equation}
\|\hat{\beta}_{r(\gamma),\alpha_{n}}-\beta_{r(\gamma),\alpha_{n}}^{*}\|_{2}^{2}\stackrel{p}{\lesssim}\alpha_{n}^{2}log(d)s_{n}^{2}/n+\alpha_{n}^{-2}
\end{equation}
\end{theorem}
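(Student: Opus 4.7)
\textbf{Proof proposal for Theorem~\ref{thm:stochastic_rate}.} The plan is to adapt the standard penalized $M$-estimator argument (as in Theorem 4.8 of \cite{hanfang}) to our loss $l_n(\beta)=\hat L_n(\beta)+\lambda_n\|\beta\|_1$, feeding in Lemmas~\ref{lem:c_beta_approx_rate}, \ref{lem:stochastic_inf_norm} and \ref{lem:stochastic_error_rate} as the three analytic ingredients. Since $\hat\beta_{r(\gamma),\alpha_n}$ minimizes $l_n$ on $\mathcal{H}$ and $\beta_{r(\gamma),\alpha_n}^{*}\in\mathcal{H}$ by (A6), I would start from the basic inequality
\[
\hat L_n(\hat\beta_{r(\gamma),\alpha_n}) - \hat L_n(\beta_{r(\gamma),\alpha_n}^{*}) \;\leq\; \lambda_n\bigl(\|\beta_{r(\gamma),\alpha_n}^{*}\|_1 - \|\hat\beta_{r(\gamma),\alpha_n}\|_1\bigr),
\]
and split the left-hand side as $\bigl[(\hat L_n-L_n)(\beta_{r(\gamma),\alpha_n}^{*}+\hat\Delta)-(\hat L_n-L_n)(\beta_{r(\gamma),\alpha_n}^{*})\bigr] + \bigl[L_n(\hat\beta_{r(\gamma),\alpha_n})-L_n(\beta_{r(\gamma),\alpha_n}^{*})\bigr]$, the first bracket being bounded by Lemma~\ref{lem:stochastic_error_rate} after verifying $\|\hat\Delta\|_0 \le 2s_n$ and $\|\hat\Delta\|_2 \le 2r(\gamma)$.

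Next, I would obtain restricted strong convexity for the $L_n$ piece. Assuming $r(\gamma)$ is chosen so that $\beta_{r(\gamma),\alpha_n}^{*}$ lies in the relative interior of $\mathcal{H}$ (which is guaranteed for large $n$ by Theorem~\ref{thm:approx_rate}), the gradient of $L_n$ there vanishes in coordinates $2,\dots,d$, so a second-order Taylor expansion gives
\[
L_n(\hat\beta_{r(\gamma),\alpha_n}) - L_n(\beta_{r(\gamma),\alpha_n}^{*}) = \tfrac{1}{2}\hat\Delta^\top \nabla^2 L_n(\tilde\beta)\hat\Delta
\]
for some $\tilde\beta$ on the segment. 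Combining (A3), which gives $\lambda_{\min}(-\nabla^2 C(\beta^{*}))\geq c_3$, with Lemma~\ref{lem:c_beta_approx_rate} (extended to second derivatives via the smoothness of the sigmoid and (A2)) shows $\nabla^2 L_n(\tilde\beta)$ is within $O(\alpha_n^{-2})$ of $-\nabla^2 C(\beta^{*})$, yielding a lower bound $\tfrac12\hat\Delta^\top \nabla^2 L_n(\tilde\beta)\hat\Delta \gtrsim \|\hat\Delta\|_2^{2} - O(\alpha_n^{-2})$.

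Then I would derive the cone condition. Choosing $\lambda_n\asymp \alpha_n\sqrt{\log d/n}$ so that $\lambda_n\geq 2\|\nabla\hat L_n(\beta_{r(\gamma),\alpha_n}^{*})\|_\infty$ in high probability by Lemma~\ref{lem:stochastic_inf_norm}, a standard dual-norm decomposition of $\langle\nabla\hat L_n(\beta_{r(\gamma),\alpha_n}^{*}),\hat\Delta\rangle$ and the sparsity (A6) produce $\|\hat\Delta_{S^{c}}\|_1\leq 3\|\hat\Delta_S\|_1$, hence $\|\hat\Delta\|_1\leq 4\sqrt{s_n}\,\|\hat\Delta\|_2$. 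Assembling the pieces,
\[
\|\hat\Delta\|_2^{2} \;\lesssim\; \lambda_n\sqrt{s_n}\,\|\hat\Delta\|_2 \;+\; \alpha_n\sqrt{\frac{s_n^{2}\log d}{n}}\,\|\hat\Delta\|_2 \;+\; \alpha_n^{-2},
\]
and solving this quadratic inequality gives $\|\hat\Delta\|_2^{2}\lesssim \alpha_n^{2} s_n^{2}\log(d)/n + \alpha_n^{-2}$, as claimed.

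The main obstacle is the restricted strong convexity step: assumption (A3) is a statement about $\nabla^2 C$ at $\beta^{*}$, but what is needed is a uniform lower bound on $\nabla^2 L_n$ along the whole segment from $\beta_{r(\gamma),\alpha_n}^{*}$ to $\hat\beta_{r(\gamma),\alpha_n}$, restricted to the cone. Transferring curvature from $C$ to $L_n$ using Lemma~\ref{lem:c_beta_approx_rate} at the level of Hessians, and carefully tracking the residual $O(\alpha_n^{-2})$ error, is the delicate part of the argument and is precisely what forces the $\alpha_n^{-2}$ summand in the final rate; the remaining steps are bookkeeping on top of Lemmas~\ref{lem:stochastic_inf_norm} and \ref{lem:stochastic_error_rate}.
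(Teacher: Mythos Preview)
Your overall architecture---basic inequality, cone condition from $\lambda_n\ge 2\|\nabla\hat L_n(\beta_{r(\gamma),\alpha_n}^{*})\|_\infty$ via Lemma~\ref{lem:stochastic_inf_norm}, empirical-process control via Lemma~\ref{lem:stochastic_error_rate}, then solve a quadratic in $\|\hat\Delta\|_2$---matches the paper's proof, which explicitly follows Theorem~4.8 and Lemma~4.10 of \cite{hanfang}. The substantive difference is in how restricted strong convexity is obtained, and this is where your proposal has a gap.

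You propose to Taylor expand $L_n$ around $\beta_{r(\gamma),\alpha_n}^{*}$ and then argue that $\nabla^2 L_n(\tilde\beta)$ is within $O(\alpha_n^{-2})$ of $\Gamma=-\nabla^2 C(\beta^{*})$ by ``extending Lemma~\ref{lem:c_beta_approx_rate} to second derivatives.'' But Lemma~\ref{lem:c_beta_approx_rate} is a function-value bound, and differentiating twice in $\beta$ pulls out a factor $\alpha_n^{2}$ from $F(\alpha_n\,\cdot\,)$; there is no reason the $\alpha_n^{-2}$ rate survives at the Hessian level, and no such result is available in the paper. The paper avoids this entirely: it first uses Lemma~\ref{lem:c_beta_approx_rate} at the \emph{function-value} level to pass from $L_n$-differences to $L$-differences at the cost of an additive $c_{10}\alpha_n^{-2}$, and then Taylor expands $L$ (not $L_n$) around $\beta^{*}$ (not $\beta_{r(\gamma),\alpha_n}^{*}$), where (A3) applies directly. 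Because the expansion point is $\beta^{*}$, a cross term $\hat\Delta^{T}\Gamma(\beta^{*}-\beta_{r(\gamma),\alpha_n}^{*})$ appears, and this is controlled by Theorem~\ref{thm:approx_rate} as $c_9\alpha_n^{-1}\|\hat\Delta\|_2$. The resulting lower bound $L_n(\beta_{r(\gamma),\alpha_n}^{*}+\hat\Delta)-L_n(\beta_{r(\gamma),\alpha_n}^{*})\ge c_8\|\hat\Delta\|_2^{2}-c_9\alpha_n^{-1}\|\hat\Delta\|_2-c_{10}\alpha_n^{-2}$ then feeds into the same assembly you describe. In short, replace your Hessian-transfer step by the paper's two-step route (function-value transfer via Lemma~\ref{lem:c_beta_approx_rate}, then Taylor expand $L$ at $\beta^{*}$ and invoke Theorem~\ref{thm:approx_rate}); the rest of your outline goes through as written.
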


From Theorem \ref{thm:approx_rate} and Theorem \ref{thm:stochastic_rate}, we conclude $\|\hat{\beta}_{r(\gamma),\alpha_{n}}-\beta^{*}\|_{2}^{2}\stackrel{p}{\lesssim}\alpha_{n}^{2}log(d)s_{n}^{2}/n+\alpha_{n}^{-2}$.
By choosing $\alpha_{n}\asymp(\frac{n}{log(d)s_{n}^{2}})^{\frac{1}{4}}$,
we get the overall $L_{2}$ error rate as

\begin{theorem}[Overall Error Rate]\label{thm:overall_rate}
\begin{equation}
\|\hat{\beta}_{r(\gamma),\alpha_{n}}-\beta^{*}\|_{2}\stackrel{p}{\lesssim}(log(d)s_{n}^{2}/n)^{\frac{1}{4}}
\end{equation}
\end{theorem}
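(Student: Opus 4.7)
The plan is to derive Theorem \ref{thm:overall_rate} by combining the approximation error bound in Theorem \ref{thm:approx_rate} with the stochastic error bound in Theorem \ref{thm:stochastic_rate} via the triangle inequality, and then optimizing the tuning parameter $\alpha_n$ to balance the two contributions.

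First, I would write
\begin{equation*}
\hat{\beta}_{r(\gamma),\alpha_{n}}-\beta^{*} = \bigl(\hat{\beta}_{r(\gamma),\alpha_{n}}-\beta_{r(\gamma),\alpha_{n}}^{*}\bigr) + \bigl(\beta_{r(\gamma),\alpha_{n}}^{*}-\beta^{*}\bigr),
\end{equation*}
apply the triangle inequality in $\|\cdot\|_2$, and then use the elementary inequality $(a+b)^2\le 2a^2+2b^2$ to obtain
\begin{equation*}
\|\hat{\beta}_{r(\gamma),\alpha_{n}}-\beta^{*}\|_{2}^{2}\le 2\|\hat{\beta}_{r(\gamma),\alpha_{n}}-\beta_{r(\gamma),\alpha_{n}}^{*}\|_{2}^{2}+2\|\beta_{r(\gamma),\alpha_{n}}^{*}-\beta^{*}\|_{2}^{2}.
\end{equation*}
Plugging in Theorem \ref{thm:approx_rate} (which bounds the second term by $\alpha_n^{-2}$) and Theorem \ref{thm:stochastic_rate} (which bounds the first term in probability by $\alpha_n^2 \log(d)s_n^2/n + \alpha_n^{-2}$), I immediately get
\begin{equation*}
\|\hat{\beta}_{r(\gamma),\alpha_{n}}-\beta^{*}\|_{2}^{2}\stackrel{p}{\lesssim}\alpha_{n}^{2}\,\frac{\log(d)s_{n}^{2}}{n}+\alpha_{n}^{-2}.
\end{equation*}

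Next I would optimize the right-hand side over $\alpha_n$. Treating $\alpha_n$ as a real variable and differentiating $f(\alpha)=\alpha^{2}\log(d)s_n^2/n+\alpha^{-2}$ yields the stationarity condition $\alpha^{4}=n/(\log(d)s_n^2)$, so the optimal choice is $\alpha_n\asymp\bigl(n/(\log(d)s_n^2)\bigr)^{1/4}$. Substituting this value makes the two terms equal, and each becomes $\bigl(\log(d)s_n^2/n\bigr)^{1/2}$. Summing and taking a square root then gives
\begin{equation*}
\|\hat{\beta}_{r(\gamma),\alpha_{n}}-\beta^{*}\|_{2}\stackrel{p}{\lesssim}\bigl(\log(d)s_{n}^{2}/n\bigr)^{1/4},
\end{equation*}
which is the claim.

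I do not expect any genuine obstacle here, since all the heavy lifting has already been absorbed into Theorems \ref{thm:approx_rate} and \ref{thm:stochastic_rate}; the only points requiring mild care are (i) noting that both bounds hold uniformly enough that the triangle inequality can be applied pointwise in $\alpha_n$, and (ii) verifying that the $\asymp$ choice of $\alpha_n$ is consistent with the regime under which Lemmas \ref{lem:c_beta_approx_rate}--\ref{lem:stochastic_error_rate} were proven (in particular, that $\alpha_n\to\infty$ while $\alpha_n^2\log(d)s_n^2/n\to 0$, i.e.\ $\log(d)s_n^2=o(n)$). Provided the usual sparsity and scaling conditions on $(s_n,d,n)$ hold, the balancing step is purely algebraic and the result follows.
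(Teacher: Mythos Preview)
Your proposal is correct and mirrors the paper's own argument essentially step for step: the paper also combines Theorem~\ref{thm:approx_rate} and Theorem~\ref{thm:stochastic_rate} to obtain $\|\hat{\beta}_{r(\gamma),\alpha_{n}}-\beta^{*}\|_{2}^{2}\stackrel{p}{\lesssim}\alpha_{n}^{2}\log(d)s_{n}^{2}/n+\alpha_{n}^{-2}$ and then chooses $\alpha_{n}\asymp(n/(\log(d)s_{n}^{2}))^{1/4}$ to balance the two terms. Your additional remarks on checking that this choice of $\alpha_n$ is compatible with the earlier lemmas are sensible but not elaborated in the paper.
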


\begin{remark}
For comparison, in \cite{hanfang}, they proved $\|\beta_{r(\gamma),\alpha_{n}}^{*}-\beta^{*}\|_{2}^{2}\lesssim\alpha_{n}^{-1}$
and $\|\hat{\beta}_{r(\gamma),\alpha_{n}}-\beta_{r(\gamma),\alpha_{n}}^{*}\|_{2}^{2}\stackrel{p}{\lesssim}\alpha_{n}^{2}log(d)s_{n}/n+\alpha_{n}^{-1}+\kappa_{n}$.
Choose $\alpha_{n}\asymp(\frac{n}{log(d)s_{n}})^{\frac{1}{3}}$, then in their paper $\|\hat{\beta}_{r(\gamma),\alpha_{n}}-\beta^{*}\|_{2} \stackrel{p}{\lesssim}(log(d)s_{n}/n)^{\frac{1}{6}}+\kappa_{n}^{\frac{1}{2}}$. 
\end{remark}

\subsection{Discrete Cases}
\label{ssec:discrete}

We will make some different assumptions as we discuss the discrete covariates cases in this section. The main difference between the discrete and continuous cases is that in the discrete cases, $-C(\beta)$ might be a constant around $\beta^*$.

\begin{assumption}
\label{asm:discrete}$ $

(B1) There exists a matrix $\mathcal{M}\in \mathbb{R}^{(d-s)\times (s-1)}$, such that the conditional distribution $(X_{i}-X_{j})_{S^{c}}-\mathcal{M}(X_{i}-X_{j})_{S\setminus 1} |(X_{i}-X_{j})_{S}$ is a symmetric distribution centered at the origin.

(B2) $|w_{i}|\leq M_{2}$.

(B3) $-M_{4}\leq\|X_{i}\|_{\infty}\leq M_{4}$.
\end{assumption}

\begin{remark}
In the continuous covariates cases, (B1) is satisfied under some multivariate normal distributions, where we can think of $S$ to be slightly larger than the nonzero indexes set. We can let $S$ include the features which are correlated with the first feature.
\end{remark}

\begin{remark}
(B2) and (B3) also appear in the continuous cases and are used in the proof of Lemma \ref{lem:stochastic_inf_norm} and \ref{lem:stochastic_error_rate}.
\end{remark}

In the discrete cases, we bound the approximation error and the stochastic error at the same time. Lemma \ref{lem:stochastic_inf_norm} and \ref{lem:stochastic_error_rate} also hold true for discrete cases and the proofs are almost the same. Moreover, we have a different main theorem as follows:

\begin{theorem}[Error Rate in Discrete Cases]\label{thm:discrete_rate}

As long as $s^{3}\log(d)/n\rightarrow0$, for all $(x_{i},x_{j})$ such that $(x_{i}^{T}-x_{j}^{T})\beta^{*}>0$, we have $(x_{i}^{T}-x_{j}^{T})_{S}\hat{\beta}_{r,\alpha_{n},S}\stackrel{p}{>}\frac{\epsilon}{\alpha_{n}}$, and $\|\hat{\beta}_{r,\alpha_{n},S^{c}}\|_{1}\stackrel{p}{\lesssim}(\frac{s^{3}\log(d)}{n})^{1/2}$.
\end{theorem}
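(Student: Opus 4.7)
My approach combines the symmetry assumption (B1) at the population level with the two concentration lemmas (which the paper notes still apply in the discrete case) to control both parts of the conclusion at once.

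The first step is to argue that under (B1) the smoothed target can be chosen sparse, i.e., $\beta^{*}_{r(\gamma),\alpha_{n},S^{c}}=\vec{0}$. For any admissible $\beta$, define the ``projection'' $\tilde{\beta}$ by $\tilde{\beta}_{1}=1$, $\tilde{\beta}_{S^{c}}=0$ and $\tilde{\beta}_{S\setminus\{1\}}=\beta_{S\setminus\{1\}}+\mathcal{M}^{\top}\beta_{S^{c}}$, and set $Z=(X_{i}-X_{j})_{S^{c}}-\mathcal{M}(X_{i}-X_{j})_{S\setminus\{1\}}$, so that
\[
\beta^{\top}(X_{i}-X_{j}) \;=\; \tilde{\beta}^{\top}(X_{i}-X_{j})+\beta_{S^{c}}^{\top}Z .
\]
By (B1), $Z$ is symmetric about $0$ given the $S$-coordinates. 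Conditioning on $(X_{i,S},X_{j,S})$, the factor $D(X_{i})-D(X_{j})$ is determined and shares the sign of $\tilde{\beta}^{\top}(X_{i}-X_{j})$ (using monotonicity of $Q$ in (\ref{equ:concord}) and the proximity $\tilde{\beta}_{S}\approx\beta^{*}_{S}$ from the $r(\gamma)$-ball constraint). A Jensen-type comparison over the symmetric law of $Z$, exploiting concavity of $F$ on $(0,\infty)$ and convexity on $(-\infty,0)$, then yields $C_{n}(\tilde{\beta})\ge C_{n}(\beta)$, so we may take $\beta^{*}_{r,\alpha_{n},S^{c}}=\vec{0}$ and, in particular, $s_{n}\asymp s$.

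Given this sparsity of the target, the basic inequality $l_{n}(\hat{\beta}_{r,\alpha_{n}})\le l_{n}(\beta^{*}_{r,\alpha_{n}})$ together with $\beta^{*}_{r,\alpha_{n},S^{c}}=0$ rearranges into
\[
\lambda_{n}\|\hat{\Delta}_{S^{c}}\|_{1} \;\le\; \lambda_{n}\|\hat{\Delta}_{S}\|_{1}+\bigl[\hat{L}_{n}(\beta^{*}_{r,\alpha_{n}})-\hat{L}_{n}(\hat{\beta}_{r,\alpha_{n}})\bigr],
\]
where $\hat{\Delta}=\hat{\beta}_{r,\alpha_{n}}-\beta^{*}_{r,\alpha_{n}}$. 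Writing $\hat{L}_{n}=L_{n}+(\hat{L}_{n}-L_{n})$ and invoking $L_{n}(\hat{\beta})\ge L_{n}(\beta^{*}_{r,\alpha_{n}})$ (by definition of $\beta^{*}_{r,\alpha_{n}}$), the bracketed term is dominated by the fluctuation $B(-\hat{\Delta})$ that Lemma \ref{lem:stochastic_error_rate} bounds by $\alpha_{n}s_{n}\sqrt{\log d/n}\,\|\hat{\Delta}\|_{2}$ with high probability. Choosing $\lambda_{n}\asymp\alpha_{n}\sqrt{\log d/n}$ from Lemma \ref{lem:stochastic_inf_norm}, using the sparsity $\|\hat{\Delta}\|_{0}\le 2s_{n}$ together with $\|\hat{\Delta}_{S}\|_{1}\le\sqrt{s}\,\|\hat{\Delta}_{S}\|_{2}$, and optimizing the resulting quadratic inequality in $\|\hat{\Delta}\|_{2}$ under $s^{3}\log d/n\to 0$ yields $\|\hat{\Delta}_{S^{c}}\|_{1}=\|\hat{\beta}_{r,\alpha_{n},S^{c}}\|_{1}\stackrel{p}{\lesssim}\sqrt{s^{3}\log d/n}$. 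For the concordance claim, discreteness of $(x_{i}-x_{j})^{\top}\beta^{*}$ ensures a uniform lower bound $c_{0}>0$ on pairs with $(x_{i}-x_{j})^{\top}\beta^{*}>0$; then by (B3),
\[
(x_{i}-x_{j})_{S}^{\top}\hat{\beta}_{r,\alpha_{n},S} \;\ge\; (x_{i}-x_{j})^{\top}\beta^{*} - M_{4}\|\hat{\beta}_{S}-\beta^{*}_{S}\|_{1},
\]
and the $\ell_{1}$ control on $\hat{\beta}_{S}-\beta^{*}_{S}$ obtained in the preceding paragraph forces the right-hand side above any fixed $\epsilon/\alpha_{n}$ with high probability for $\alpha_{n}$ sufficiently small relative to $c_{0}$.

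\textbf{Main obstacle.} The population sparsity reduction in Step 1 is the crux of the argument. Since $F$ is only locally concave, the Jensen comparison fails pointwise on pairs where $|\tilde{\beta}^{\top}(X_{i}-X_{j})|<|\beta_{S^{c}}^{\top}Z|$; these boundary contributions must be controlled separately by combining the uniform boundedness of $F$ with the admissibility constraint $\|\beta-\beta^{*}\|_{2}\le r(\gamma)$, which keeps $\|\beta_{S^{c}}\|_{2}$ small. This is precisely the ingredient that the continuous analysis did not need, since assumption (A3) supplied a positive-definite Hessian of $-C$ at $\beta^{*}$ that produced a quadratic lower bound automatically.
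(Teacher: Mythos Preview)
Your proposal has a structural gap at the core of Step~2. You write that after the basic inequality you will be ``optimizing the resulting quadratic inequality in $\|\hat{\Delta}\|_{2}$,'' but in the discrete setting there is no quadratic: Assumption~(A3) is absent, $-C(\beta)$ may be flat around $\beta^{*}$, and $L_{n}$ inherits no curvature. The inequality you actually derive,
\[
\lambda_{n}\|\hat{\Delta}_{S^{c}}\|_{1}\;\le\;\lambda_{n}\|\hat{\Delta}_{S}\|_{1}+B(-\hat{\Delta})\;\lesssim\;\lambda_{n}\sqrt{s}\,\|\hat{\Delta}_{S}\|_{2}+\alpha_{n}s\sqrt{\tfrac{\log d}{n}}\,\|\hat{\Delta}\|_{2},
\]
is linear in $\|\hat{\Delta}\|_{2}$ on both sides; with only the constraint $\|\hat{\Delta}\|_{2}\le 2r(\gamma)$ and your choice $\lambda_{n}\asymp\alpha_{n}\sqrt{\log d/n}$, it yields merely $\|\hat{\Delta}_{S^{c}}\|_{1}\lesssim\sqrt{s}\,r(\gamma)$, which does not vanish. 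Your concordance argument then collapses for the same reason: you invoke ``the $\ell_{1}$ control on $\hat{\beta}_{S}-\beta^{*}_{S}$ obtained in the preceding paragraph,'' but that paragraph controls only $\hat{\Delta}_{S^{c}}$; without curvature you have no handle on $\hat{\beta}_{S}-\beta^{*}_{S}$ at all.

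The paper bypasses curvature by exploiting the \emph{discrete gap}. It defines the set $\mathcal{U}_{\varepsilon/\alpha_{n}}$ of $\beta$'s whose $S$-block preserves the ordering of $\beta^{*}$, and shows that any $\beta\notin\mathcal{U}_{\varepsilon/\alpha_{n}}$ incurs a uniform loss $C(\beta^{*})-C_{n}(\beta)\ge\delta>0$ (discreteness of the support makes $\delta$ strictly positive), while $C(\beta^{*})-C_{n}(\beta^{*})\le c_{11}e^{-\varepsilon\alpha_{n}}$. Taking $\alpha_{n}$ and $\lambda_{n}$ to be \emph{constants} (in fact $\lambda_{n}=\delta/(3r\sqrt{s})$, not $\asymp\sqrt{\log d/n}$), this gap dominates both the stochastic fluctuation from Lemma~\ref{lem:stochastic_error_rate} and the penalty, forcing $\hat{\beta}_{r,\alpha_{n}}\in\mathcal{U}_{\varepsilon/\alpha_{n}}$ directly---which \emph{is} the concordance statement. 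For $\|\hat{\beta}_{r,\alpha_{n},S^{c}}\|_{1}$, the paper compares $\hat{\beta}$ not to $\beta^{*}_{r,\alpha_{n}}$ but to its own truncation $\widetilde{\hat{\beta}}$ (zeroing the $S^{c}$ block): the penalty difference is then exactly $\lambda_{n}\|\hat{\beta}_{S^{c}}\|_{1}$ with no $\|\hat{\Delta}_{S}\|_{1}$ term to absorb, and the population piece is nonpositive by the elementary identity $F(a+t)+F(a-t)\le 2F(a)$ for $a>0$. Note this inequality holds for \emph{all} $t$, so your ``boundary contribution'' worry is unfounded; what it genuinely requires is $a>0$, which is precisely the $\mathcal{U}_{\varepsilon/\alpha_{n}}$ membership established beforehand---and which your Step~1 assumes (``shares the sign of $\tilde{\beta}^{\top}(X_{i}-X_{j})$'') rather than proves.
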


\begin{remark}
In the proof, we choose $\alpha_{n}=max(1,\frac{log(c_{11})-log(\delta)+log(3)}{\varepsilon})$.
Using (B3), the theorem implies that for any $i,j$, $|\hat{\beta}_{r,\alpha_{n},S^{c}}(X_{i}^{T}-X_{j}^{T})_{S^{c}}|\stackrel{p}{\lesssim}(\frac{s^{3}\log(d)}{n})^{1/2}$. If $(x_{i}^{T}-x_{j}^{T})\beta^{*}>0$, then $(x_{i}^{T}-x_{j}^{T})_{S}\hat{\beta}_{r,\alpha_{n},S}\stackrel{p}{>}\frac{\varepsilon}{\alpha_{n}}\gtrsim(\frac{s^{3}\log(d)}{n})^{1/2}\stackrel{p}{\gtrsim}$$|\hat{\beta}_{r,\alpha_{n},S^{c}}(x_{i}^{T}-x_{j}^{T})_{S^{c}}|$. Therefore, when $\frac{s^{3}\log(d)}{n}\longrightarrow0$, we conclude
that once $(x_{i}^{T}-x_{j}^{T})\beta^{*}>0$, then $(x_{i}^{T}-x_{j}^{T})\hat{\beta}_{r,\alpha_{n}}\stackrel{p}{\geq}0$. In other words, if the contrast functions $D(x_i)>D(x_j)$, then $x_{i}^T\hat{\beta}_{r,\alpha_{n}}\stackrel{p}{\geq}x_{j}^T\hat{\beta}_{r,\alpha_{n}}$, which means we will rank the subjects in a right order with high probability.
\end{remark}

\section{Simulation Studies}
\label{sec:simulation}
Four methods are compared in this section: POWL in \cite{powl}, PAL in \cite{PAL}, SCAL in \cite{scal} and SMCAL. POWL method minimizes 
\begin{equation}
\frac{1}{n}{\displaystyle \sum_{i=1}^{n}\frac{Y_{i}\cdot[1-(2A_{i}-1)g(X_{i})]_{+}}{A_{i}\pi(X_{i})+(1-A_{i})[1-\pi(X_{i})]}}+\lambda{\displaystyle \sum_{i=1}^{d}}|\beta_{i}|
\end{equation}
where $g(X_{i})=\beta^{T}\cdot X_{i}+\beta_{0}$ is the decision function and the tuning parameter $\lambda$ is selected by the largest IPW estimator. PAL is a dynamic treatment regime approach but can be used in the single-stage problem because PAL estimates the latest stage at first. PAL perform variable selection using penalized A-learning, then use unpenalized A-learning to solve the coefficients. We used the R package provided by the authors of PAL to get its results.

To evaluate the estimated coefficients, we report the MSE after normalization. To evaluate the variable selection accuracy, we report the Incorrect Zeros (true coefficient is zero but the estimation is nonzero) and Correct Zeros (true coefficient is zero and the estimation is zero). To evaluate the estimated treatment regime, we report the Percentage of Correct Decision (PCD) and the mean response if we follow the estimated treatment
regime (Estimated Value). Estimated Value is calculated by drawing 1000 new samples. Standard errors of the PCDs and Estimated Values are in the parenthesis. 

\subsection{Low Dimension}
\label{ssec:low-dim}
\subsubsection{Linear Case}
\label{sssec:linear}
The first example can be found in \cite{zhao2012} as well as in \cite{scal}. Set $d=50$, $X_{i1},X_{i2},...,X_{i50}$ are
independent variables, all generated from $U(-1,1)$ in the continuous cases. $\pi(X_{i})$
is chosen to be $0.5$ for all $i=1,2,...n$. $Y_{i}|X_{i},A_{i}\sim N(Q_{0}(X_{i}),1)$,
where $Q_{0}(X_{i})=1+2X_{i1}+X_{i2}+0.5X_{i3}+0.442(1-X_{i1}-X_{i2})(2A_{i}-1)$. $100$ simulations are conducted for $n=30,100,200$ respectively. 

In the discrete cases, everything remains the same except that we generate $X_{i1},X_{i2},...,X_{i50}$ from the uniform distribution on $\{-0.9, -0.7, -0.5, -0.3, -0.1, 0.1, 0.3, 0.5, 0.7, 0.9\}$. 

Numerical comparisons are summarized in Table \ref{tab:low}, where we did simulations only for SMCAL and PAL, the results by SCAL and POWL were from \cite{scal}.
\begin{table}
\caption{Comparison between POWL, PAL, SCAL and SMCAL: low dimensional case.  \label{tab:low}}
\small
\begin{center}
\begin{tabular}{rrrrrrr}
& n & MSE & Incorr0(0) & Corr0(48) & PCD & Estimated Value\\\hline
POWL & 30 & 1.60 & 1.70 & 42.23 & 0.615(0.02) & 1.09(0.02)\tabularnewline
 & 100 & 1.27 & 1.94 & 46.64 & 0.768(0.02) & 1.27(0.02)\tabularnewline
 & 200 & 1.09 & 1.99 & 47.78 & 0.786(0.02) & 1.30(0.03)\tabularnewline
\hline 
PAL & 30 & 1.83 & 1.76 & 46.23 & 0.631(0.012) & 1.17(0.015) \tabularnewline
 & 100 & 1.01 & 0.92 & 46.53 & 0.808(0.009) & 1.37(0.009) \tabularnewline
 & 200 & 0.32 & 0.17 & 47.00 & 0.903(0.005)  & 1.45(0.006) \tabularnewline
\hline 
SCAL & 30 & 1.40 & 0.73 & 35.79 & 0.659(0.01) & 1.16(0.01)\tabularnewline
 & 100 & 0.52 & 0.11 & 41.97 & 0.764(0.01) & 1.31(0.01)\tabularnewline
 & 200 & 0.19 & 0.01 & 46.03 & 0.749(0.01) & 1.32(0.01)\tabularnewline
\hline 
SMCAL & 30 & 0.93 & 0.82 & 43.09 & 0.677(0.014) & 1.21(0.017) \tabularnewline
 & 100 & 0.81 & 0.75 & 44.11 & 0.735(0.010) & 1.28(0.013) \tabularnewline
 & 200 & 0.69 & 0.57 & 45.42 & 0.788(0.007) & 1.34(0.009) \tabularnewline
\hline 
SMCAL-Discrete & 30 & 0.95 & 0.89 & 43.12 & 0.653(0.014) & 1.20(0.017) \tabularnewline
 & 100 & 0.79 & 0.74 & 44.15 & 0.723(0.009) & 1.28(0.011) \tabularnewline
 & 200 & 0.70 & 0.50 & 43.78 & 0.764(0.006) & 1.33(0.008) \tabularnewline
\end{tabular}
\end{center}
\end{table}

In general, POWL gives many estimated zeros. Our MSEs are larger than SCAL but smaller than POWL. In fact, our theoretical results indicate a slower convergence rate than SCAL. In the case when the sample size is 200, our PCDs and Estimated Values are significantly higher than those in SCAL, which may be explained by SMCAL has a smaller bias compared to SCAL. We also
notice that the PCD of 200-sample SCAL is abnormally smaller than the PCD of 100-sample SCAL. This may be because SCAL does not converge to the true coefficients due to the induced bias by the hinge loss. In summary, we think when the sample size is small, SCAL is better than SMCAL. But when the sample size grows larger, SMCAL will be increasingly better than SCAL.

In the discrete cases, the MSE, variable selection, PCDs and Estimated Values have similar patterns. 

PAL has the best performance in this example. The 100-sample PAL is already much better than 200-sample SCAL and SMCAL in terms of PCDs and Estimated Values. In this example, the contrast function $D(X_i)=0.884(1-X_{i1}-X_{i2})$, which is a linear combination of the features. In such a linear case, pairwise comparison based methods like CAL, SCAL and SMCAL seems to be less efficient than PAL. But CAL, SCAL and SMCAL only assume $D(X_{i})=Q(\beta^{*T}X_{i})$, which is flexible and can be applied in nonlinear cases.

\subsubsection{Nonlinear Case}
\label{sssec:nonlinear}
The second example, see Table \ref{tab:nonlinear}, compares PAL and SMCAL on a case when the contrast function is nonlinear of the features. We set $d=50$, $n=30,100,200$, $X_i\sim \mathcal{N}(0,I)$, $\pi(X_i)=0.5$ and $Y_i=1+X_{i,1}-X_{i,2}+X_{i,3}+X_{i,4}+A\cdot (exp(1+X_{i,1}+X_{i,2}-X_{i,3}+X_{i,4})-exp(1))+\epsilon$, where $\epsilon\sim \mathcal{N}(0,0.5)$. Our MSEs are much smaller than PAL, and we can successfully select important features when the sample size is relatively large. Our PCDs and Estimated Values are also much better than PAL. PAL seems to be not suitable for such a nonlinear case, but SMCAL performs quite well.

\begin{table}
\caption{Comparison between PAL and SMCAL: nonlinear case.  \label{tab:nonlinear}}
\small
\begin{center}
\begin{tabular}{rrrrrrr}
& n & MSE & Incorr0(0) & Corr0(46) & PCD & Estimated Value\\\hline
PAL & 30 & 1.44 & 3.12 & 44.30 & 0.579(0.007) & 10.42(0.431) \tabularnewline
 & 100 & 0.82 & 2.22 & 45.35 & 0.666(0.006) & 14.67(0.486) \tabularnewline
 & 200 & 0.58 & 1.80 & 45.83 & 0.706(0.005) & 16.39(0.552) \tabularnewline
\hline 
SMCAL & 30 & 0.90 & 1.03 & 33.53 & 0.689(0.006) & 17.45(0.464) \tabularnewline
 & 100 & 0.52 & 0.05 & 29.44 & 0.814(0.006) & 18.56(0.319) \tabularnewline
 & 200 & 0.28 & 0.00 & 30.66 & 0.888(0.004) & 19.92(0.510) \tabularnewline
\end{tabular}
\end{center}
\end{table}

\subsection{High Dimension}
\label{ssec:high-dim}

We conduct simulations for the following six high dimensional models
which can be found in Section 4.2, \cite{scal}. We uniformly
set sample size $n=100$ and dimension $d=500$. The interaction terms
with treatment type $A$ are $X\beta$ or $(X\beta)^{3}$, which satisfies our Monotone Single Index Model assumption. However, the baseline function can be very flexible. The following models use linear, polynomial or even trigonometric sines as the baseline function. We set $X\sim \mathcal{N}(0,I)$, $\pi(X_{i})=0.5$ and $\varepsilon \sim \mathcal{N}(0,1)$.

$\mathbf{Model1}:$ $Y=3+X\gamma_{1}+X\beta A+\varepsilon$, where
$\gamma_{1}=(-1,1,\mathbf{0}_{d-2})^{T}$, $\beta=(2,1.8,0,0,0,-1.6,\mathbf{0}_{d-6})^{T}$.

$\mathbf{Model2}:$ $Y=3-0.5(X\gamma_{1})^{2}+0.625(X\gamma_{2})^{2}+X\beta A+\varepsilon$,
where $\gamma_{1}=(1,0.5,\mathbf{0}_{d-2})^{T}$, $\gamma_{2}=(0,1,\mathbf{0}_{d-2})^{T}$,
$\beta=(2,1.8,0,0,0,-1.6,\mathbf{0}_{d-6})^{T}$.

$\mathbf{Model3}:$ $Y=1-sin(X\gamma_{1})+sin(X\gamma_{2})+X\beta A+\varepsilon$,
where $\gamma_{1}=(1,\mathbf{0}_{d-1})^{T}$, $\gamma_{2}=(0,1,\mathbf{0}_{d-2})^{T}$,
$\beta=(2,1.8,0,0,0,-1.6,\mathbf{0}_{d-6})^{T}$.

$\mathbf{Model4}:$ $Y=3+X\gamma_{1}+(X\beta)^{3}A+\varepsilon$,
where $\gamma_{1}=(-1,1,\mathbf{0}_{d-2})^{T}$, $\beta=(1,0.9,0,0,0,-0.8,\mathbf{0}_{d-6})^{T}$.

$\mathbf{Model5}:$ $Y=3-0.5(X\gamma_{1})^{2}+0.625(X\gamma_{2})^{2}+(X\beta)^{3}A+\varepsilon$,
where $\gamma_{1}=(1,0.5,\mathbf{0}_{d-2})^{T}$, $\gamma_{2}=(0,1,\mathbf{0}_{d-2})^{T}$,
$\beta=(1,0.9,0,0,0,-0.8,\mathbf{0}_{d-6})^{T}$.

$\mathbf{Model6}:$ $Y=1-sin(X\gamma_{1})+sin(X\gamma_{2})+(X\beta)^{3}A+\varepsilon$,
where $\gamma_{1}=(1,\mathbf{0}_{d-1})^{T}$, $\gamma_{2}=(0,1,\mathbf{0}_{d-2})^{T}$,
$\beta=(1,0.9,0,0,0,-0.8,\mathbf{0}_{d-6})^{T}$.

\cite{scal} reported the results of SCAL, and we ran simulations for SMCAL, all summarized in Table \ref{tab:high}. 
\begin{table}
\scriptsize
\caption{Simulation results of SCAL and SMCAL: high dimensional cases.  \label{tab:high}}
\begin{center}
\begin{tabular}{rrrrrrrrrrr}
Model & \multicolumn{2}{c}{MSE} & \multicolumn{2}{c}{Incorr0(0)} & \multicolumn{2}{c}{Corr0(497)} & \multicolumn{2}{c}{PCD} & \multicolumn{2}{c}{Estimated Value}\tabularnewline
& SCAL & SMCAL & SCAL & SMCAL & SCAL & SMCAL & SCAL & SMCAL & SCAL & SMCAL\tabularnewline
\hline 
Model 1 & 0.61 & 0.76 & 0.75 & 0.86 & 482.62 & 490.81 & 0.744(0.01) & 0.732(0.005) & 3.80(0.02) & 3.82(0.016) \tabularnewline
Model 2 & 0.56 & 0.71 & 0.57 & 0.55 & 485.34 & 492.69 & 0.763(0.01) & 0.749(0.006) & 3.79(0.03) & 3.87(0.015) \tabularnewline
Model 3 & 0.44 & 0.69 & 0.49 & 0.47 & 488.12 & 491.59 & 0.786(0.01) & 0.751(0.005) & 1.92(0.02) & 1.88(0.015) \tabularnewline
Model 4 & 0.35 & 0.62 & 0.35 & 0.21 & 486.81 & 481.69 & 0.801(0.01) & 0.752(0.006) & 5.67(0.04) & 5.69(0.043) \tabularnewline
Model 5 & 0.32 & 0.56 & 0.25 & 0.18 & 487.00 & 485.41 & 0.810(0.01) & 0.740(0.008) & 5.63(0.05) & 5.62(0.057) \tabularnewline
Model 6 & 0.29 & 0.55 & 0.20 & 0.05 & 485.33 & 485.09 & 0.820(0.01) & 0.774(0.007) & 3.74(0.04) & 3.78(0.040) \tabularnewline
\end{tabular}
\end{center}
\end{table}
It shows that our model selections and Estimated Values are close to those in SCAL. The MSEs and PCDs are not as good as those in SCAL, perhaps because SCAL has a faster convergence rate, which makes it perform better than SMCAL in high dimensions. 

\section{Real Data Analysis}
\label{sec:real_data}

The STAR{*}D study, which focused on the Major Depression Disorder
(MDD), enrolled over 4,000 outpatients from age 18 to age 75. There
were altogether four treatment levels: 2, 2A, 3 and 4. At each treatment level, patients were randomly assigned to different treatment groups. Various clinic and socioeconomic factors were recorded, as well as the treatment outcomes. More details can be found in \cite{stard}. 

We focus on level 2 of STAR{*}D study. The data consists of 319 samples,
with each sample containing the patient ID, treatment type, treatment
outcome and 305 other clinical or genetic features. There are two
different treatment types: bupropion (BUP) and sertraline (SER). We
use $0$ to label SER and $1$ to label BUP. The 16-item Quick Inventory of Depressive Symptomatology-Clinician-Rated (QIDS-C16) ranges from 0 to 24, with smaller values indicating better outcomes. In our setting, larger value indicates better treatment effect, so we use the negative QIDS-C16 as our treatment outcome. 

Among the 319 patients selected, 166 of them have received SER and
153 of them have received BUP. After obtaining our estimated treatment
regime, we use the inverse probability weighted estimator (IPW) proposed
by \cite{ipwe} 
\[
\frac{1}{n}{\displaystyle \sum_{i=1}^{n}\frac{Y_{i}\mathbbm{1}[A_{i}=\mathbbm{1}(\hat{\beta^{T}X_{i}>c})]}{A_{i}\pi(X_{i})+(1-A_{i})[1-\pi(X_{i})]}}
\]
to calculate the estimated values. We draw bootstrap samples for 1000 times, each
time with 1000 samples to get the Estimated Value and 95\% CI. 

We ran the simulation for PAL and our SMCAL. \cite{scal} reported the results of SCAL, POWL and non-dynamic treatment regimes. These are all summarized in Table \ref{tab:est_value}. 
\begin{table}
\caption{Estimated Values of POWL, SCAL and SMCAL.  \label{tab:est_value}}
\begin{center}
\begin{tabular}{rrrr}
Treatment Regime & Estimated Value & Diff & 95\% CI on Diff\tabularnewline
\hline 
Optimal Regime (SCAL) & -6.77 &  & \tabularnewline
Optimal Regime (SMCAL) & -6.90 & 0.13 & (-0.46,0.69) \tabularnewline
Optimal Regime (PAL) & -8.15 & 1.38 & (0.71,2.02)\tabularnewline
Optimal Regime (POWL) & -9.46 & 2.69 & (1.18,4.24)\tabularnewline
BUP & -10.50 & 3.75 & (2.38,5.19)\tabularnewline
SER & -10.72 & 3.97 & (2.57,5.50)\tabularnewline
\end{tabular}
\end{center}
\end{table}
Consider SCAL as baseline, \emph{Diff} in the table means the difference between the estimated value of each method and the estimated value of SCAL. According to Table \ref{tab:est_value}, the optimal treatment regimes are all better than non-dynamic treatment regimes. Optimal treatment regimes by SCAL and SMCAL have the best estimated values. PAL and POWL are not as good as SCAL and SMCAL in this real data example.

Comparison of the received treatments and the estimated optimal treatment regimes are available in Table \ref{tab:treatment-smcal}. 
\begin{table}
\small
\caption{Treatment recommended by SMCAL. \label{tab:treatment-smcal}}
\begin{center}
\begin{tabular}{r|rr}
 & Recommended: SER & Recommended: BUP\tabularnewline
\hline 
Randomized Treatment: SER & 75 & 91 \tabularnewline
Randomized Treatment: BUP & 68 & 85 \tabularnewline
\end{tabular}
\end{center}
\end{table}
\emph{Randomized Treatment} means the treatment actually performed in the STAR*D study. \emph{Estimated Treatment} means the treatment suggested by the estimated optimal treatment regime. From Table \ref{tab:treatment-smcal} we can see that among those 153 people who received BUP, the optimal treatment regime by SMCAL suggests that 77 of them receive SER and 76 of them still receive BUP. It is a rather balanced treatment regime.

\section{Conclusions}
\label{sec:conc}
In this paper, we have proposed SMCAL, which is based on the concordance-assisted-learning (CAL) framework and the smoothing procedure by \cite{hanfang}, and aims at dealing with the optimization issue of CAL in high dimensional data. We established convergence results in both continuous covariates and discrete covariates cases. The proposed SMCAL can be successfully applied in the case when the contrast function is nonlinear of the features, and it does not induce a relatively large bias compared to SCAL.

\bigskip
\begin{center}
{\large\bf SUPPLEMENTARY MATERIAL}
\end{center}


\section{Proofs}

We first present Lemma \ref{lem:D_bound} which will be used in the proof of Lemma \ref{lem:stochastic_error_rate}. 

\begin{lemma}
\label{lem:D_bound}
There exists $M_{5}$, s.t. ${\displaystyle {\sup_{h\in \mathbb{R}^{d},\|h\|_{0}\leq4s_{n}}\frac{h^{T}D^{T}Dh}{\frac{n(n-1)}{2}\|h\|_{2}^{2}}}\leq M_{5}s_n}$,
where 
\begin{gather*}
D=\begin{pmatrix}X_{1}^{T}-X_{2}^{T}\\
X_{1}^{T}-X_{3}^{T}\\
...\\
X_{1}^{T}-X_{n}^{T}\\
X_{2}^{T}-X_{3}^{T}\\
...
\end{pmatrix}_{\frac{n(n-1)}{2}\times d}
\end{gather*}
\end{lemma}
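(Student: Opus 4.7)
The plan is to expand $h^{T}D^{T}Dh$ as a sum of squared pairwise inner products and then bound each term uniformly over the sparse set using Assumption (A5)/(B3) combined with the standard $\ell_1$-$\ell_2$ comparison on sparse vectors. Explicitly, since the rows of $D$ are the differences $X_{i}-X_{j}$ for $i<j$, one has
\begin{equation*}
h^{T}D^{T}Dh \;=\; \sum_{i<j}\bigl(h^{T}(X_{i}-X_{j})\bigr)^{2},
\end{equation*}
so it suffices to bound $|h^{T}(X_{i}-X_{j})|$ uniformly in $(i,j)$ and in $h$ with $\|h\|_{0}\le 4s_{n}$.

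First I would apply H\"older's inequality on the support of $h$: by (A5) the coordinates of $X_{i}-X_{j}$ are bounded by $2M_{4}$ in absolute value, so $|h^{T}(X_{i}-X_{j})|\le 2M_{4}\|h\|_{1}$. Next I would convert the $\ell_{1}$ norm to an $\ell_{2}$ norm using sparsity: $\|h\|_{1}\le \sqrt{\|h\|_{0}}\,\|h\|_{2}\le 2\sqrt{s_{n}}\,\|h\|_{2}$ whenever $\|h\|_{0}\le 4s_{n}$. Combining these two inequalities yields $|h^{T}(X_{i}-X_{j})|\le 4M_{4}\sqrt{s_{n}}\,\|h\|_{2}$, hence
\begin{equation*}
\bigl(h^{T}(X_{i}-X_{j})\bigr)^{2}\;\le\; 16M_{4}^{2}\,s_{n}\,\|h\|_{2}^{2}.
\end{equation*}

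Finally I would sum this uniform pointwise bound over the $\frac{n(n-1)}{2}$ pairs and divide by $\tfrac{n(n-1)}{2}\|h\|_{2}^{2}$, which produces the claimed bound with $M_{5}=16M_{4}^{2}$ (independent of $n$, $d$, and the particular sparse $h$ chosen). Since the bound is deterministic and does not depend on the coordinates occupied by the support of $h$, taking the supremum over $\|h\|_{0}\le 4s_{n}$ is immediate.

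The main (mild) obstacle is ensuring that the $s_{n}$-rate, rather than an $s_{n}^{2}$-rate, comes out, which is exactly why the argument routes through the $\ell_{1}$-to-$\ell_{2}$ conversion via Cauchy--Schwarz on the support instead of, say, bounding $\|h\|_{1}$ by $\sqrt{d}\,\|h\|_{2}$ or using a naive entrywise bound that scales with $\|h\|_{0}^{2}$. Beyond that, the argument is purely deterministic --- no concentration, no covering --- which is appropriate because Lemma~\ref{lem:D_bound} is invoked inside the higher-probability Rademacher/chaining argument underlying Lemma~\ref{lem:stochastic_error_rate} and only needs to produce a deterministic operator-norm-type bound on $D^{T}D$ restricted to sparse directions.
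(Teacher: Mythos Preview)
Your proof is correct and follows essentially the same approach as the paper: both expand $h^{T}D^{T}Dh$ as a sum of $((X_i-X_j)^T h)^2$, bound each term via $\|X_i-X_j\|_\infty\le 2M_4$ and $\|h\|_1\le 2\sqrt{s_n}\|h\|_2$, and arrive at the identical constant $M_5=16M_4^2$.
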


\begin{proof}[Proof of Lemma \ref{lem:D_bound}]

(A5) in Assumptions \ref{asm:stochastic} or (B3) in Assumption \ref{asm:discrete} implies that 
$${\displaystyle \sup_{h\in \mathbb{R}^{d}:\|h\|_{0}\leq4s_{n},\|h\|_{2}=1}((X_{i}-X_{j})^{T}h)^{2}}
\leq 4 M_{4}^{2}{\displaystyle \sup_{h\in \mathbb{R}^{d}:\|h\|_{0}\leq4s_{n},\|h\|_{2}=1}\|h\|_{1}^{2}} \leq 16 M_{4}^{2}s_{n}.$$
Therefore,
\begin{align}
\sup_{h\in \mathbb{R}^{d}:\|h\|_{0}\leq4s_{n}}\frac{h^{T}D^{T}D h}{\frac{n(n-1)}{2}\|h\|_{2}^{2}} & \leq\frac{2}{n(n-1)}{\displaystyle {\displaystyle \sup_{h\in \mathbb{R}^{d}:\|h\|_{0}\leq4s_{n}}\frac{h^{T}({\displaystyle \sum_{1\leq i<j\leq n}}(X_{i}-X_{j})(X_{i}-X_{j})^{T})h}{\|h\|_{2}^{2}}}}\nonumber \\
 & \leq\frac{2}{n(n-1)}{{\displaystyle \sum_{1\leq i<j\leq n}}{\displaystyle \sup_{h\in \mathbb{R}^{d}:\|h\|_{0}\leq4s_{n}}\frac{((X_{i}-X_{j})^{T}h)^{2}}{\|h\|_{2}^{2}}}}\nonumber \\
 & <{\displaystyle {\sup_{h\in \mathbb{R}^{d}:\|h\|_{0}\leq4s_{n},\|h\|_{2}=1}((X_{i}-X_{j})^{T}h)^{2}}}\leq16M_{4}^{2}s_{n},
\end{align}
This implies that if we let $M_{5}=16M_{4}^{2}$, then
\[
{\displaystyle {\sup_{h\in \mathbb{R}^{d},\|h\|_{0}\leq4s_{n}}}\frac{h^{T}D^{T}Dh}{\frac{n(n-1)}{2}\|h\|_{2}^{2}}}\leq M_{5}s_{n}.
\]
\end{proof}

\subsection{Proofs of Section \ref{ssec:cts}}
\begin{proof}[Proof of Lemma \ref{lem:c_beta_approx_rate}]
Define 
$$W(v,\beta)=E[(Q(X_{i}^{T}\beta^{*})-Q(X_{j}^{T}\beta^{*}))|(X_{i}^{T}-X_{j}^{T})\beta=v]f_{(X_{i}^{T}-X_{j}^{T})\beta}(v),$$
and
\[T=\frac{1}{2}(X_{i}^{T}+X_{j}^{T})\beta,\quad U=X^{T}\beta^{*}-X^{T}\beta,\] then
\begin{align}
 & W(v,\beta) \nonumber \\
= & \int_{\mathbb{R}}\int_{\mathbb{R}}\int_{\mathbb{R}}(Q(t+\frac{v}{2}+u_{1})-Q(t-\frac{v}{2}+u_{2}))f_{(X^{T}\beta,U)}(t+\frac{v}{2},u_{1})f_{(X^{T}\beta,U)}(t-\frac{v}{2},u_{2})du_{1}du_{2}dt\nonumber \\
= & \int_{\mathbb{R}}\int_{\mathbb{R}}Q(t+\frac{v}{2}+u_{1})f_{(X^{T}\beta,U)}(t+\frac{v}{2},u_{1})f_{X^{T}\beta}(t-\frac{v}{2})du_{1}dt\nonumber \\
&-\int_{\mathbb{R}}\int_{\mathbb{R}}Q(t-\frac{v}{2}+u_{2})f_{X^{T}\beta}(t+\frac{v}{2})f_{(X^{T}\beta,U)}(t-\frac{v}{2},u_{2})du_{2}dt\nonumber \\
= & \int_{\mathbb{R}}\int_{\mathbb{R}}Q(t+\frac{v}{2}+u_{1})f_{(X^{T}\beta,U)}(t+\frac{v}{2},u_{1})f_{X^{T}\beta}(t-\frac{v}{2})du_{1}dt\nonumber \\
 & -\int_{\mathbb{R}}\int_{\mathbb{R}}Q(t+\frac{v}{2}+u_{1})f_{X^{T}\beta}(t+\frac{3v}{2})f_{(X^{T}\beta,U)}(t+\frac{v}{2},u_{1})du_{1}dt\nonumber \\
= & \int_{\mathbb{R}}\int_{\mathbb{R}}Q(t+\frac{v}{2}+u_{1})f_{(X^{T}\beta,U)}(t+\frac{v}{2},u_{1})(f_{X^{T}\beta}(t-\frac{v}{2})-f_{X^{T}\beta}(t+\frac{3v}{2}))du_{1}dt
\end{align}

(A1) and (A2) in Assumptions \ref{asm:approx} implies $|Q(t+\frac{v}{2}+u_{1})|\leq M_{1}$ and
$|f_{X^{T}\beta}(t-\frac{v}{2})-f_{X^{T}\beta}(t+\frac{3v}{2})|\leq2|v|M_{3}$.
So
$$|W(v,\beta)|\leq2M_{1}M_{3}|v|.$$

Notice that
\begin{equation}
\begin{aligned} 
& E[(Q(X_{i}^{T}\beta^{*})-Q(X_{j}^{T}\beta^{*}))(1-F((X_{i}^{T}-X_{j}^{T})\beta\alpha_{n}))\mathbbm{1}((X_{i}^{T}-X_{j}^{T})\beta>0)]\nonumber\\
= & E[(Q(X_{i}^{T}\beta^{*})-Q(X_{j}^{T}\beta^{*}))F((X_{j}^{T}-X_{i}^{T})\beta\alpha_{n})\mathbbm{1}((X_{j}^{T}-X_{i}^{T})\beta<0)]\nonumber\\
= & E[(Q(X_{j}^{T}\beta^{*})-Q(X_{i}^{T}\beta^{*}))F((X_{i}^{T}-X_{j}^{T})\beta\alpha_{n})\mathbbm{1}((X_{i}^{T}-X_{j}^{T})\beta<0)],
\end{aligned}
\end{equation}
we have
\begin{equation}
\begin{aligned}\begin{aligned}\end{aligned}
|C(\beta)-C_{n}(\beta)|= & |E[(Q(X_{i}^{T}\beta^{*})-Q(X_{j}^{T}\beta^{*}))(\mathbbm{1}((X_{i}^{T}-X_{j}^{T})\beta>0)-F((X_{i}^{T}-X_{j}^{T})\beta\alpha_{n}))]|\\
= & |E[(Q(X_{i}^{T}\beta^{*})-Q(X_{j}^{T}\beta^{*}))(1-F((X_{i}^{T}-X_{j}^{T})\beta\alpha_{n}))\mathbbm{1}((X_{i}^{T}-X_{j}^{T})\beta>0)\nonumber\\
 & -(Q(X_{i}^{T}\beta^{*})-Q(X_{j}^{T}\beta^{*}))F((X_{i}^{T}-X_{j}^{T})\beta\alpha_{n})\mathbbm{1}((X_{i}^{T}-X_{j}^{T})\beta<0)]|\\
= & 2|E[(Q(X_{j}^{T}\beta^{*})-Q(X_{i}^{T}\beta^{*}))F((X_{i}^{T}-X_{j}^{T})\beta\alpha_{n})\mathbbm{1}((X_{i}^{T}-X_{j}^{T})\beta<0)]|.
\end{aligned}
\end{equation}
Therefore,
\begin{equation}
\begin{aligned}
&\sup_{\beta\in\mathbb{R}^{d}:\beta_1=1,\|\beta^{*}-\beta\|_{2}\leq r(\gamma)}|C(\beta)-C_{n}(\beta)|\\
=&  2\sup_{\beta\in\mathbb{R}^{d}:\beta_1=1,\|\beta^{*}-\beta\|_{2}\leq r(\gamma)}|\int_{-\infty}^{0}W(v,\beta)\frac{1}{1+e^{-\alpha_{n}v}}d v|\\
\leq & 4M_{1}M_{3}|\int_{0}^{\infty}\frac{v}{1+e^{\alpha_{n}v}}dv|\\
=& 4M_{1}M_{3}\alpha_{n}^{-2}|\int_{0}^{\infty}\frac{v}{1+e^{v}}dv|\lesssim\alpha_{n}^{-2}.
\end{aligned}
\end{equation}
\end{proof}

\begin{proof}[Proof of Theorem \ref{thm:approx_rate}]
Notice that
\[
C(\beta_{r(\gamma),\alpha_{n}}^{*})-C_{n}(\beta_{r(\gamma),\alpha_{n}}^{*})\leq C(\beta_{r(\gamma),\alpha_{n}}^{*})-C_{n}(\beta^{*})\leq C(\beta^{*})-C_{n}(\beta^{*}),
\]
we have 
\begin{equation}
\begin{aligned}
|C(\beta^{*})-C(\beta_{r(\gamma),\alpha_{n}}^{*})|\leq&|C_{n}(\beta^{*})-C(\beta_{r(\gamma),\alpha_{n}}^{*})|+|C(\beta^{*})-C_{n}(\beta^{*})|\\
\leq& 2\sup_{\beta\in\mathbb{R}^{d}:\beta_1=1,\|\beta^{*}-\beta\|_{2}\leq r(\gamma)}|C(\beta)-C_{n}(\beta)|\lesssim \alpha_{n}^{-2}.
\end{aligned}
\end{equation}
The last step is by Lemma \ref{lem:c_beta_approx_rate}.

Similar to \cite{hanfang}, we can pick a positive constant set $\gamma=\{\gamma_{1},\gamma_{2}\}$ with $\dfrac{\gamma_{2}}{\gamma_{1}}$
close to $1$, such that for some small enough $r(\gamma)>0$, $\beta_1=\beta_1^*=1$ and $\parallel\beta-\beta^{*}\parallel_{2}$$\leq r(\gamma)$, we have
\begin{equation}
\gamma_{1}\lambda_{min}(\Gamma)\parallel\beta-\beta^{*}\parallel_{2}^{2}\leq C(\beta^{*})-C(\beta)\leq\gamma_{2}\lambda_{max}(\Gamma)\parallel\beta-\beta^{*}\parallel_{2}^{2},
\end{equation}
then by (A3) in Assumption \ref{asm:approx}, we know $\lambda_{min}(\Gamma)\geq c_3$, so $$\parallel\beta_{r(\gamma),\alpha_{n}}^{*}-\beta^{*}\parallel_{2}^{2}\leq \frac{C(\beta^{*})-C(\beta_{r(\gamma),\alpha_{n}}^{*})}{\gamma_{1}\lambda_{min}(\Gamma)} \lesssim\alpha_{n}^{-2}.$$
\end{proof}

\begin{proof}[Proof of Lemma \ref{lem:stochastic_inf_norm}]

Let $$h_{k}(x_{i},x_{j})=\frac{\partial}{\partial\beta_{k}}(w_{i}-w_{j})F((x_{i}-x_{j})^{T}\beta\alpha_{n})|_{\beta=\beta_{r(\gamma),\alpha_{n}}^{*}},$$
where $k=2,3,...,d$. Then $Eh_{k}(X_{i},X_{j})=0$ since $\beta_{r(\gamma),\alpha_{n}}^{*}$ is the locally maximizer of $C_n(\beta)$. Because 
$$F^{'}(x)=1/(e^{x}+e^{-x}+2)\leq\frac{1}{4},$$ 
under
Assumption \ref{asm:stochastic} or Assumption \ref{asm:discrete}, we both have

$\begin{aligned}
|h_{k}(X_{i},X_{j})|= & |(W_{i}-W_{j})F^{'}((X_{i}-X_{j})^{T}\beta_{r(\gamma),\alpha_{n}}^{*}\alpha_{n})\alpha_{n}(X_{i,k}-X_{j,k})|\\
\text{\ensuremath{\leq}} & \frac{1}{4}\alpha_{n}|(W_{i}-W_{j})(X_{i,k}-X_{j,k})|\leq\alpha_{n}M_{2}M_{4}.
\end{aligned}$

Therefore, using the Hoeffding bound for U-statistics which can be found in, for example, \cite{pitcan2017},
\begin{align}
P\{\|\nabla\hat{L}_{n}(\beta_{r(\gamma),\alpha_{n}}^{*})\|_{\infty}\geq t\} & =P\{\|\nabla\hat{C}_{n}(\beta_{r(\gamma),\alpha_{n}}^{*})\|_{\infty}\geq t\}\nonumber \\
\leq & d\cdot P\{|{\displaystyle \sum_{1\leq i<j\leq n}\frac{2}{n(n-1)}h_{k}(x_{i},x_{j})|\geq t\}}\nonumber \\
\leq & d\cdot2exp\{-\frac{\lfloor n/2\rfloor t^{2}}{2\alpha_{n}^{2}M_{2}^{2}M_{4}^{2}}\}
\end{align}
Choose $t=\sqrt{8}\alpha_{n}M_{2}M_{4}\sqrt{\frac{log(d)}{n}}$,
the proof is completed.
\end{proof}

\begin{proof}[Proof of Lemma \ref{lem:stochastic_error_rate}]
Define
\begin{align*}
B(h) =&|\hat{L}_{n}(\beta_{r(\gamma),\alpha_{n}}^{*}-h)-L_{n}(\beta_{r(\gamma),\alpha_{n}}^{*}-h)-\hat{L}_{n}(\beta_{r(\gamma),\alpha_{n}}^{*})+L(\beta_{r(\gamma),\alpha_{n}}^{*})|\\
 =&|\frac{1}{n(n-1)}{\sum_{i\neq j}}(w_{i}-w_{j})F((x_{i}-x_{j})^{T}(\beta_{r(\gamma),\alpha_{n}}^{*}-h)\alpha_{n})\\
 &- E[(W_{i}-W_{j})F((X_{i}-X_{j})^{T}(\beta_{r(\gamma),\alpha_{n}}^{*}-h)\alpha_{n})]\\
 &-\frac{1}{n(n-1)} {\sum_{i\neq j}}(w_{i}-w_{j})F((x_{i}-x_{j})^{T}\beta_{r(\gamma),\alpha_{n}}^{*}\alpha_{n})\\
 &+E[(W_{i}-W_{j})F((X_{i}-X_{j})^{T}\beta_{r(\gamma),\alpha_{n}}^{*}\alpha_{n})]|
\end{align*}

Under Assumption \ref{asm:stochastic}, if $\|h\|_0\leq 2s_n$,
\begin{align*}
 & |(w_{i}-w_{j})F((X_{i}^{T}-X_{j}^{T})(\beta_{r(\gamma),\alpha_{n}}^{*}-h)\alpha_{n})-(w_{i}-w_{j})F((X_{i}^{T}-X_{j}^{T})\beta_{r(\gamma),\alpha_{n}}^{*}\alpha_{n})|\\
\leq & \frac{1}{4}|w_{i}-w_{j}|\cdot|(X_{i}^{T}-X_{j}^{T})h|\cdot\alpha_{n}\leq M_{2}M_{4}\alpha_{n}\|h\|_{1}\leq M_{2}M_{4}\alpha_{n}\sqrt{2 s_{n}}\|h\|_{2}.
\end{align*}
According to the Hoeffding bound for U-statistics, 
$$
P(B(h)\geq t)\stackrel{p}{\leq}2\exp\{-\frac{\lfloor n/2\rfloor t^{2}}{4 M_{2}^{2}M_{4}^{2}\alpha_{n}^{2}s_{n}\|h\|_{2}^{2}}\}.
$$
Let $t=\frac{\|h\|_{2}t^{'}}{\sqrt{n}}$ and $M_{6}=8M_{2}^{2}M_{4}^{2}$,
then
\[
P(\frac{B(h)}{\|h\|_{2}}\geq\frac{t^{'}}{\sqrt{n}})\stackrel{p}{\leq}2\exp\{-\frac{t^{'2}}{s_{n}\alpha_{n}^{2}M_{6}}\}.
\]
Choose $t^{'}=C\sqrt{2s_{n}(s_{n}+1)log(d)}\alpha_{n}$, then
\begin{equation}
P(\frac{B(h)}{\|h\|_{2}}\geq\frac{C\sqrt{2s_{n}(s_{n}+1)log(d)}\alpha_{n}}{\sqrt{n}})\stackrel{p}{\leq}2d^{-\frac{2(s_{n}+1)C^{2}}{M_{6}}}.
\end{equation}

Let $N_{1}$ be an $\epsilon$-covering of $\{h\in \mathbb{R}^{d}:\|h\|_{0}\leq2s_{n},\:\|h\|_{2}= 2r(\gamma)\}$, where we constrain $N_1\subset \{h\in \mathbb{R}^{d}:\|h\|_{0}\leq2s_{n},\:\|h\|_{2}= 2r(\gamma)\}$ and $\epsilon$ is a small positive constant to be chosen. Since the covering number of $\{u\in \mathbb{R}^{2s_{n}}:\:\|u\|_{2}= 2r(\gamma)\}$ should not exceed the packing number, which is less than
$$(\frac{1+\frac{\epsilon}{2r(\gamma)}}{\frac{\epsilon}{2r(\gamma)}})^{2s_n},$$
we can find a $N_1$ such that
$$|N_1|\leq {d \choose 2s_{n}}(\frac{4r(\gamma)}{\epsilon})^{2s_{n}}.$$ 
Consider 
\begin{align}
N={\displaystyle \bigcup_{i=1}^{\lceil2r(\gamma)/\epsilon\rceil}\frac{i}{\lceil2r(\gamma)/\epsilon\rceil}\cdot N_{1}},\label{equ:e-net_def}
\end{align}
then
\begin{align}
\label{equ:e-net}
 & P({\displaystyle {\sup_{h\in N}}\frac{B(h)}{\|h\|_{2}}\geq\frac{C\sqrt{2s_{n}(s_{n}+1)log(d)}\alpha_{n}}{\sqrt{n}})}\nonumber \\
\stackrel{p}{\leq} & \frac{4r(\gamma)}{\epsilon}{d \choose 2s_{n}}(\frac{4r(\gamma)}{\epsilon})^{2s_{n}}2d^{-\frac{2(s_{n}+1)C^{2}}{M_{6}}}\leq2(\frac{4r(\gamma)}{\epsilon}d^{1-\frac{C^{2}}{M_{6}}})^{2s_{n}+1}.
\end{align}

When $\|h_{1}\|_{2}=\|h_{2}\|_{2}$,
\begin{align}
& |\frac{B(h_{1})}{\|h_{1}\|_{2}}-\frac{B(h_{2})}{\|h_{2}\|_{2}}|=\frac{1}{\|h_1\|_2}|B(h_{1})-B(h_{2})| \nonumber \\
= & \frac{1}{\|h_1\|_2}|\frac{1}{n(n-1)}{\sum_{i\neq j}}(w_{i}-w_{j})F((x_{i}-x_{j})^{T}(\beta_{r(\gamma),\alpha_{n}}^{*}-h_1)\alpha_{n})\nonumber\\
&-{\displaystyle E[(W_{i}-W_{j})F((X_{i}-X_{j})^{T}(\beta_{r(\gamma),\alpha_{n}}^{*}-h_1)\alpha_{n})]}\nonumber\\
&-\frac{1}{n(n-1)}{\sum_{i\neq j}}(w_{i}-w_{j})F((x_{i}-x_{j})^{T}(\beta_{r(\gamma),\alpha_{n}}^{*}-h_2)\alpha_{n})\nonumber\\
&+{\displaystyle E[(W_{i}-W_{j})F((X_{i}-X_{j})^{T}(\beta_{r(\gamma),\alpha_{n}}^{*}-h_2)\alpha_{n})]}|\nonumber\\
\leq & \frac{M_{2}\alpha_{n}}{2n(n-1)}\|D(\frac{h_{1}}{\|h_{1}\|_{2}}-\frac{h_{2}}{\|h_{2}\|_{2}})\|_{1}+\frac{M_{2}\alpha_{n}}{2}\cdot E|(X_{i}^{T}-X_{j}^{T})(\frac{h_{1}}{\|h_{1}\|_{2}}-\frac{h_{2}}{\|h_{2}\|_{2}})|\nonumber \\
\leq & \frac{\frac{1}{4}M_{2}\alpha_{n}}{\sqrt{\frac{n(n-1)}{2}}}\|D(\frac{h_{1}}{\|h_{1}\|_{2}}-\frac{h_{2}}{\|h_{2}\|_{2}})\|_{2}+M_{2}\alpha_{n}\cdot E|X_{i}^{T}(\frac{h_{1}}{\|h_{1}\|_{2}}-\frac{h_{2}}{\|h_{2}\|_{2}})|,
\end{align}
where $D$ is defined in Lemma \ref{lem:D_bound}. Then by Lemma \ref{lem:D_bound}, we have
\begin{align}
\label{equ:same_norm}
& {\displaystyle {\sup_{\|h_{1}-h_{2}\|_{0}\leq4s_{n},\|h_{1}\|_{2}=\|h_{2}\|_{2},\|\frac{h_{1}}{\|h_{1}\|_{2}}-\frac{h_{2}}{\|h_{2}\|_{2}}\|_{2}\leq\epsilon}}|\frac{B(h_{1})}{\|h_{1}\|_{2}}-\frac{B(h_{2})}{\|h_{2}\|_{2}}|}\nonumber \\
\leq &\frac{1}{4}M_{2}\alpha_{n}\epsilon\cdot(\sqrt{M_{5}s_{n}}+4E[\sup_{h\epsilon\mathbb{R}^{d}:\|h\|_{0}\leq4s_{n},\|h\|_{2}=1}|X_{i}^{T}h|])\nonumber \\
\leq & \frac{1}{4}M_{2}\alpha_{n}\epsilon\cdot(\sqrt{M_{5}s_{n}}+8M_{4}\sqrt{s_{n}}).
\end{align}
It bounds the difference between $\frac{B(h_{1})}{\|h_{1}\|_{2}}$
and $\frac{B(h_{2})}{\|h_{2}\|_{2}}$ when $\|h_{1}\|_{2}=\|h_{2}\|_{2}$. Next, consider $h_{1}$ and $h_{2}$ in the same direction but $\|h_{1}\|_{2}\neq \|h_{2}\|_{2}$. If $\|h\|_{2}=1$ and $\|h\|_0\leq 2s_n$, we have
\begin{align*}
 & |(\frac{1}{t}[F((x_{i}-x_{j})^{T}(\beta_{r(\gamma),\alpha_{n}}^{*}-t h)\alpha_{n})-F((x_{i}-x_{j})^{T}\beta_{r(\gamma),\alpha_{n}}^{*}\alpha_{n})])^{'}|\\
= & |\frac{-1}{t}F^{'}((x_{i}-x_{j})^{T}(\beta_{r(\gamma),\alpha_{n}}^{*}-t h)\alpha_{n})\alpha_{n}(x_{i}-x_{j})^{T}h\\
&-\frac{1}{t^2}[F((x_{i}-x_{j})^{T}(\beta_{r(\gamma),\alpha_{n}}^{*}-t h)\alpha_{n})-F((x_{i}-x_{j})^{T}\beta_{r(\gamma),\alpha_{n}}^{*}\alpha_{n})]|\\
= & |\frac{-1}{t}F^{'}((x_{i}-x_{j})^{T}(\beta_{r(\gamma),\alpha_{n}}^{*}-t h)\alpha_{n})\alpha_{n}(x_{i}-x_{j})^{T}h\\
&-\frac{1}{t^2}[F^{'}((x_{i}-x_{j})^{T}(\beta_{r(\gamma),\alpha_{n}}^{*}-t h)\alpha_{n})(-t(x_{i}-x_{j})^{T}h\alpha_{n})\\
&+\frac{1}{2}F^{''}((x_{i}-x_{j})^{T}(\beta_{r(\gamma),\alpha_{n}}^{*}-\xi h)\alpha_{n})\cdot(-t(x_{i}-x_{j})^{T}h\alpha_{n})^{2}]|\\
= & |\frac{1}{2}F^{''}((x_{i}-x_{j})^{T}(\beta_{r(\gamma),\alpha_{n}}^{*}-\xi h)\alpha_{n})\cdot\alpha_{n}^{2}((x_{i}-x_{j})^{T}h)^{2}|\leq c_{6}\alpha_{n}^{2}M_4^2\|h\|_1^2\leq 2c_{6}\alpha_{n}^{2}M_4^2 s_n,
\end{align*}
therefore,

\begin{align}
\label{equ:same_direct}
&{\displaystyle |\frac{B(t_{1}h)}{t_{1}}-\frac{B(t_{2}h)}{t_{2}}|}\nonumber\\
\leq & \frac{2M_{2}}{n(n-1)}{\sum_{i\neq j}}|[F((x_{i}-x_{j})^{T}(\beta_{r(\gamma),\alpha_{n}}^{*}-t_{1}h)\alpha_{n})-F((x_{i}-x_{j})^{T}\beta_{r(\gamma),\alpha_{n}}^{*}\alpha_{n})]/t_{1}\nonumber \\
 & -[F((x_{i}-x_{j})^{T}(\beta_{r(\gamma),\alpha_{n}}^{*}-t_{2}h)\alpha_{n})-F((x_{i}-x_{j})^{T}\beta_{r(\gamma),\alpha_{n}}^{*}\alpha_{n})]/t_{2}|\nonumber \\
 & +2M_{2}\cdot E|[F((X_{i}-X_{j})^{T}(\beta_{r(\gamma),\alpha_{n}}^{*}-t_{1}h)\alpha_{n})-F((X_{i}-X_{j})^{T}\beta_{r(\gamma),\alpha_{n}}^{*}\alpha_{n})]/t_{1}\nonumber \\
 & -[F((X_{i}-X_{j})^{T}(\beta_{r(\gamma),\alpha_{n}}^{*}-t_{2}h)\alpha_{n})-F((X_{i}-X_{j})^{T}\beta_{r(\gamma),\alpha_{n}}^{*}\alpha_{n})]/t_{2}|
\leq 8c_{6}M_{2}M_4^2 s_n\alpha_{n}^{2}|t_{1}-t_{2}|
\end{align}

For any $h\in \{h:\|h\|_{0}\leq2s_{n},\|h\|_{2}\leq2r(\gamma)\}$, we can find an integer $i\in \{1,2,...,\lceil2r(\gamma)/\epsilon\rceil\}$ and $h_0\in N$ s.t. $|\|h\|_{2}-\frac{2i r(\gamma)}{\lceil2r(\gamma)/\epsilon\rceil}|\leq \epsilon$, $\|h_0\|_2=\frac{2i r(\gamma)}{\lceil2r(\gamma)/\epsilon\rceil}$ and $\|h_0-\frac{2i r(\gamma)h}{\lceil2r(\gamma)/\epsilon\rceil\|h\|_2}\|_2\leq \epsilon$, where $N$ is defined in (\ref{equ:e-net_def}). 

Since $$|\frac{B(h)}{\|h\|_{2}}-\frac{B(h_0)}{\|h_0\|_{2}}|\leq |\frac{B(h)}{\|h\|_{2}}-\frac{B(\frac{2i r(\gamma)h}{\lceil2r(\gamma)/\epsilon\rceil\|h\|_2})}{\|\frac{2i r(\gamma)h}{\lceil2r(\gamma)/\epsilon\rceil\|h\|_2}\|_{2}}|+|\frac{B(\frac{2i r(\gamma)h}{\lceil2r(\gamma)/\epsilon\rceil\|h\|_2})}{\|\frac{2i r(\gamma)h}{\lceil2r(\gamma)/\epsilon\rceil\|h\|_2}\|_{2}}-\frac{B(h_0)}{\|h_0\|_{2}}|,$$
by (\ref{equ:same_norm}) and (\ref{equ:same_direct}) we have
\[
{\displaystyle {\sup_{\|h\|_{0}\leq2s_{n},\|h\|_{2}\leq2r(\gamma)}}\frac{B(h)}{\|h\|_{2}}}\leq {\displaystyle {\sup_{h\in N}}\frac{B(h)}{\|h\|_{2}}}+c_{7}M_{2}\alpha_{n}^{2}\epsilon,
\]
where $c_{7}=\frac{1}{4}\sqrt{M_{5}s_{n}}+2M_{4}\sqrt{s_{n}}+8M_4^2 s_n c_{6}$.

Let $\epsilon=\frac{\sqrt{2s_n(s_{n}+1)log(d)}}{\sqrt{n}c_{7}\alpha_{n}}$ and using (\ref{equ:e-net}), we get
\begin{align}
 & P({\displaystyle {\sup_{\|h\|_{0}\leq2s_{n},\|h\|_{2}\leq2r(\gamma)}}\frac{B(h)}{\|h\|_{2}}}\geq\frac{(C-1)\sqrt{2s_{n}(s_{n}+1)log(d)}M_{2}\alpha_{n}}{\sqrt{n}})\nonumber \\
\leq & 2(\frac{4r(\gamma)}{\epsilon}d^{1-\frac{C^{2}}{2M_{6}}})^{2s_{n}+1}=2(\frac{\sqrt{n}c_{7}\alpha_{n}4r(\gamma)}{\sqrt{2s_n(s_{n}+1)log(d)}}d^{1-\frac{C^{2}}{2M_{6}}})^{2s_{n}+1},
\end{align}
and we will let $\alpha_{n}\ll n$. Choose a large
enough $C$, we have 
\[
{\displaystyle {\sup_{\|h\|_{0}\leq2s_{n},\|h\|_{2}\leq2r(\gamma)}}\frac{B(h)}{\|h\|_{2}}\stackrel{p}{\lesssim}\frac{\sqrt{s_{n}^{2}log(d)}}{\sqrt{n}}\alpha_{n}}.
\]

The proof for the discrete cases under Assumption \ref{asm:discrete} is essentially the same.
\end{proof}

\begin{proof}[Proof of Theorem \ref{thm:stochastic_rate}]

The proof of Theorem \ref{thm:stochastic_rate} is very similar to the proof of Lemma 4.10 in \cite{hanfang}, so in here we only give a sketch of the proof. According to Lemma \ref{lem:c_beta_approx_rate}, 
\begin{equation}
L_{n}(\beta_{r(\gamma),\alpha_{n}}^{*}+\hat{\Delta})-L_{n}(\beta_{r(\gamma),\alpha_{n}}^{*})\geq  L(\beta_{r(\gamma),\alpha_{n}}^{*}+\hat{\Delta})-L(\beta_{r(\gamma),\alpha_{n}}^{*})-c_{5}\alpha_{n}^{-2}.
\end{equation}
And based on Taylor Expansion of $L(\beta_{r(\gamma),\alpha_{n}}^{*})-L(\beta^{*})$
and $L(\beta_{r(\gamma),\alpha_{n}}^{*}+\hat{\Delta})-L(\beta^{*})$,
\begin{equation}
    L(\beta_{r(\gamma),\alpha_{n}}^{*}+\hat{\Delta})-L(\beta_{r(\gamma),\alpha_{n}}^{*})\geq \frac{1}{2}\hat{\Delta}^{T}\Gamma\hat{\Delta}(1+o(1))-\hat{\Delta}^{T}\Gamma(\beta^{*}-\beta_{r(\gamma),\alpha_{n}}^{*})(1+o(1)),
\end{equation}
in here $\hat{\Delta}$ and $\beta^{*}-\beta_{r(\gamma),\alpha_{n}}^{*}$ do not include the first coordinate. Then use Theorem \ref{thm:approx_rate} and Assumption \ref{asm:approx}, we have
\begin{equation}
\begin{aligned}
\label{equ:expansion}
L_{n}(\beta_{r(\gamma),\alpha_{n}}^{*}+\hat{\Delta})-L_{n}(\beta_{r(\gamma),\alpha_{n}}^{*})
\geq & c_{8}\|\hat{\Delta}\|_{2}^{2}-c_{9}\alpha_{n}^{-1}\|\hat{\Delta}\|_{2}-c_{10}\alpha_{n}^{-2}
\end{aligned}
\end{equation}
where $c_{8},c_{9},c_{10}$ are some proper constants.

Similar to the proof of Theorem 4.8 in \cite{hanfang}, as long as $\lambda_{n}\geq2P^{*}(\nabla\hat{L}_{n}(\beta_{r(\gamma),\alpha_{n}}^{*}))$
and $\hat{L}_{n}(\cdot)$ is locally convex differentiable, where $P^{*}(\cdot)$
is the dual norm of $P(\cdot)$, we have 
\[
\|\hat{\Delta}_{S_{n}^{c}}\|_{1}\leq3\|\hat{\Delta}_{S_{n}}\|_{1}
\]
Then by Lemma \ref{lem:stochastic_inf_norm}, Lemma \ref{lem:stochastic_error_rate} and (\ref{equ:expansion}), define
\begin{equation}
\begin{aligned} & \delta\hat{L}_{n}(\hat{\Delta},\beta_{r(\gamma),\alpha_{n}}^{*}):=\hat{L}_{n}(\beta_{r(\gamma),\alpha_{n}}^{*}+\hat{\Delta})-\hat{L}_{n}(\beta_{r(\gamma),\alpha_{n}}^{*})-\langle\nabla\hat{L}_{n}(\beta_{r(\gamma),\alpha_{n}}^{*}),\hat{\Delta}\rangle\\
\geq & c_{8}\|\hat{\Delta}\|_{2}^{2}-c_{9}\alpha_{n}^{-1}\|\hat{\Delta}\|_{2}-c_{10}\alpha_{n}^{-2}-{\displaystyle {\sup_{\hat{\beta}\in \mathcal{H}}}B(\hat{\Delta})-\|\hat{\Delta}\|_{1}\|\nabla\hat{L}_{n}(\beta_{r(\gamma),\alpha_{n}}^{*})\|_{\infty}}\\
\stackrel{p}{\geq} & c_{8}\|\hat{\Delta}\|_{2}^{2}-c_{9}\alpha_{n}^{-1}\|\hat{\Delta}\|_{2}-c_{10}\alpha_{n}^{-2}-2(C-1)\alpha_{n}\frac{\sqrt{s_{n}^{2}log(d)}}{\sqrt{n}}\|\hat{\Delta}\|_{2}-2\sqrt{2}M_2 M_4\|\hat{\Delta}\|_{1}\alpha_{n}\sqrt{\frac{log(d)}{n}}\\
\geq & c_{8}\|\hat{\Delta}\|_{2}^{2}-c_{9}\alpha_{n}^{-1}\|\hat{\Delta}\|_{2}-c_{10}\alpha_{n}^{-2}-2(C-1)\alpha_{n}\frac{\sqrt{s_{n}^{2}log(d)}}{\sqrt{n}}\|\hat{\Delta}\|_{2}-8\sqrt{2}M_2 M_4\|\hat{\Delta}_{S_{n}}\|_{1}\alpha_{n}\sqrt{\frac{log(d)}{n}}\\
\geq & c_{8}\|\hat{\Delta}\|_{2}^{2}-\|\hat{\Delta}\|_{2}(c_{9}\alpha_{n}^{-1}+(2C-2+8\sqrt{2}M_2 M_4)\alpha_{n}\sqrt{\frac{log(d)s_{n}^{2}}{n}})-c_{10}\alpha_{n}^{-2}.
\end{aligned}
\end{equation}
Since $\Psi(\bar{S}_{n})=\displaystyle{\sup_{v\in \bar{S}_{n}\setminus 0}}\dfrac{P(v)}{\|v\|_{2}}\leq\sqrt{s_{n}}$, we can check that the assumptions
of Theorem 4.8 in \cite{hanfang} are all satisfied. Let $\lambda_{n}\asymp\alpha_{n}\sqrt{\frac{log(d)}{n}}$.
Theorem 4.8 in \cite{hanfang} implies that $\|\hat{\Delta}\|_{2}^{2}$ $\stackrel{p}{\leq}(2\lambda_{n}\Psi(\bar{S}_{n})$
$+c_{9}\alpha_{n}^{-1}$ $+(2C-2+8\sqrt{2}M_2 M_4)\alpha_{n}\sqrt{\frac{log(d)s_{n}^{2}}{n}})^{2}/c_{8}^{2}$
$+2c_{10}\alpha_{n}^{-2}/c_{8}$ $\stackrel{p}{\lesssim}\alpha_{n}^{2}log(d)s_{n}^{2}/n+\alpha_{n}^{-2}$.
\end{proof}

\subsection{Proofs of Section \ref{ssec:discrete}}

\begin{proof}[Proof of Theorem \ref{thm:discrete_rate}]
Recall the definition of $\mathcal{M}$ in Assumption \ref{asm:discrete}. Define 
$$\varepsilon=\min\limits _{x_{i}^{T}\beta^{*}>x_{j}^{T}\beta^{*}}(x_{i}-x_{j})^{T}\beta^{*}.$$ $$\mathcal{U}_{t}=\{\beta:\beta_1=1,\forall x_{i},x_{j},\text{ if } (x_{i}^{T}-x_{j}^{T})\beta^{*}>0,\text{ then } (x_{i}^{T}-x_{j}^{T})_{S}\beta_{S}+(x_{i}^{T}-x_{j}^{T})_{S\setminus 1}\mathcal{M}^T\beta_{S^c}>t\}.$$
We let $\alpha_{n}\geq1$, then $\beta^{*}\in \mathcal{U}_{\frac{\epsilon}{\alpha_{n}}}$ according to the definition of $\epsilon$, so $\mathcal{U}_{\frac{\epsilon}{\alpha_{n}}}$
is nonempty. We then prove the theorem by three steps. 

First, let's prove: for large
enough $\alpha_{n}$, $\beta_{r,\alpha_{n}}^{*}\in \mathcal{U}_{\frac{\epsilon}{\alpha_{n}}}$. Clearly
\begin{equation}
\begin{aligned}
C_{n}(\beta) & =E[(Q(X_{i}^{T}\beta^{*})-Q(X_{j}^{T}\beta^{*}))(2F((X_{i}^{T}-X_{j}^{T})\beta\alpha_{n})-1)\mathbbm{1}((X_{i}^{T}-X_{j}^{T})\beta^{*}>0)]\\
\leq & E[(Q(X_{i}^{T}\beta^{*})-Q(X_{j}^{T}\beta^{*}))\cdot1\cdot \mathbbm{1}((X_{i}^{T}-X_{j}^{T})\beta^{*}>0)]=C(\beta^{*}),
\end{aligned}
\end{equation}
and there exists a positive constant $c_{11}$ s.t. 
\begin{equation}
\begin{aligned}
\label{equ:diff_c_beta}
C(\beta^{*})-C_{n}(\beta^{*})= & 2E[(Q(X_{j}^{T}\beta^{*})-Q(X_{i}^{T}\beta^{*}))F((X_{i}^{T}-X_{j}^{T})\beta^{*}\alpha_{n})\mathbbm{1}((X_{i}^{T}-X_{j}^{T})\beta^{*}<0)]\\
\leq & 2E[(Q(X_{j}^{T}\beta^{*})-Q(X_{i}^{T}\beta^{*}))\mathbbm{1}(X_{j}^{T}\beta^{*}>X_{i}^{T}\beta^{*})]/(1+e^{\alpha_{n}\varepsilon})\leq c_{11}e^{-\varepsilon\alpha_{n}}.
\end{aligned}
\end{equation}

If $\beta \notin \mathcal{U}_{\frac{\epsilon}{\alpha_{n}}}$,
then $\exists x_{i}^{T},x_{j}^{T}$, $s.t.$ $(x_{i}^{T}-x_{j}^{T})\beta^{*}>0$
but $(x_{i}^{T}-x_{j}^{T})_{S}\beta_{S}+(x_{i}^{T}-x_{j}^{T})_{S\setminus 1}\mathcal{M}^T\beta_{S^c}<\frac{\varepsilon}{\alpha_{n}}$. Define $\delta =\min\limits _{x_{i}^{T}\beta^{*}>x_{j}^{T}\beta^{*}}(1-F(\varepsilon))(Q(x_{i}^{T}\beta^{*})-Q(x_{j}^{T}\beta^{*}))p(x_{i,S})p(x_{j,S})$, where $p(X_{S})$ is the p.m.f. of $X_{S}$, then $\delta>0$ independent of $d$. Using (B1) in Assumption \ref{asm:discrete}, We have
\begin{equation}
\begin{aligned}
\label{equ:diff_c_beta2}
&C(\beta^{*})-C_{n}(\beta)\\  =&E[(Q(X_{i}^{T}\beta^{*})-Q(X_{j}^{T}\beta^{*}))(2-2F((X_{i}^{T}-X_{j}^{T})\beta\alpha_{n}))\mathbbm{1}((X_{i}^{T}-X_{j}^{T})\beta^{*}>0)]\\
= & E[(Q(X_{i}^{T}\beta^{*})-Q(X_{j}^{T}\beta^{*}))(2-2F((X_{i}-X_{j})_{S}^{T}\beta_{S}\alpha_{n}+(X_{i}-X_{j})_{S\setminus 1}^{T}\mathcal{M}^T\beta_{S^c}\alpha_{n}\\
&+[(X_{i}-X_{j})_{S^{c}}-\mathcal{M}(X_{i}-X_{j})_{S\setminus 1}]^{T}\beta_{S^{c}}\alpha_{n})) \mathbbm{1}((X_{i}^{T}-X_{j}^{T})\beta^{*}>0)]\\
\geq & E[(Q(X_{i}^{T}\beta^{*})-Q(X_{j}^{T}\beta^{*}))(1-F((X_{i}-X_{j})_{S}^{T}\beta_{S}\alpha_{n}+(X_{i}-X_{j})_{S\setminus 1}^{T}\mathcal{M}^T\beta_{S^c}\alpha_{n}))\\
& \mathbbm{1}((X_{i}^{T}-X_{j}^{T})\beta^{*}>0)]\geq p(x_{i,S})p(x_{j,S})(Q(x_{i}^{T}\beta^{*})-Q(x_{j}^{T}\beta^{*}))(1-F(\varepsilon))\geq\delta.
\end{aligned}
\end{equation}
Let $\alpha_{n}>\frac{log(c_{11})-log(\delta)}{\varepsilon}$, then
$C(\beta^{*})-C_{n}(\beta^{*})\leq c_{11}e^{-\varepsilon\alpha_{n}}<\delta\leq C(\beta^{*})-C_{n}(\beta)$,
which implies that $\beta \notin \mathcal{U}_{\frac{\epsilon}{\alpha_{n}}}$
is not a maximizer of $C_{n}(\cdot)$. So $\beta_{r,\alpha_{n}}^{*}\text{\ensuremath{\in}}\mathcal{U}_{\frac{\epsilon}{\alpha_{n}}}$.

Secondly, let's prove the entries of $\beta_{r,\alpha_{n},S^{c}}^{*}$
are all 0. By (B1) in Assumption \ref{asm:discrete},
\begin{equation}
\begin{aligned}
&C_{n}(\beta)\\
=&E[(Q(X_{i}^{T}\beta^{*})-Q(X_{j}^{T}\beta^{*}))(2F((X_{i}-X_{j})^{T}\beta\alpha_{n})-1)\mathbbm{1}((X_{i}-X_{j})^{T}\beta^{*}>0)]\\
= & E[(Q(X_{i}^{T}\beta^{*})-Q(X_{j}^{T}\beta^{*}))(2F((X_{i}-X_{j})_{S}^{T}\beta_{S}\alpha_{n}+(X_{i}-X_{j})_{S\setminus 1}^{T}\mathcal{M}^T\beta_{S^c}\alpha_{n}\\
&+[(X_{i}-X_{j})_{S^{c}}-\mathcal{M}(X_{i}-X_{j})_{S\setminus 1}]^{T}\beta_{S^{c}}\alpha_{n})+2F((X_{i}-X_{j})_{S}^{T}\beta_{S}\alpha_{n}+(X_{i}-X_{j})_{S\setminus 1}^{T}\mathcal{M}^T\beta_{S^c}\alpha_{n}\\
&-[(X_{i}-X_{j})_{S^{c}}-\mathcal{M}(X_{i}-X_{j})_{S\setminus 1}]^{T}\beta_{S^{c}}\alpha_{n})-2)\cdot \mathbbm{1}((X_{i}-X_{j})^{T}\beta^{*}>0)\cdot\\
 & \mathbbm{1}([(X_{i}-X_{j})_{S^{c}}-\mathcal{M}(X_{i}-X_{j})_{S\setminus 1}]^{T}\beta_{S^{c}}>0)].
\end{aligned}
\end{equation}
Let $\beta=\beta_{r,\alpha_{n}}^{*}$ in the formula. And notice that $\forall a>0,t>0,\frac{\partial}{\partial t}(F(a+t)+F(a-t))=\frac{e^{-a+t}}{(e^{t}+e^{-a})^{2}}-\frac{e^{-a+t}}{(1+e^{-a+t})^{2}}$. Because $e^{t}+e^{-a}-(1+e^{-a+t})=(e^{t}-1)(1-e^{-a})>0$, we know
$(F(a+t)+F(a-t))$ attains its maximum at $t=0$ when $a>0$. Therefore, we have

\begin{equation}
\label{equ:symmetric}
\begin{aligned}
&C_{n}(\beta_{r,\alpha_{n}}^{*})\\ 
\leq & 2E[(Q(X_{i}^{T}\beta^{*})-Q(X_{j}^{T}\beta^{*}))(2F((X_{i}-X_{j})_{S}^{T}\beta_{r,\alpha_{n},S}^{*}\alpha_{n}+(X_{i}-X_{j})_{S\setminus 1}^{T}\mathcal{M}^T\beta_{r,\alpha_{n},S^c}^*\alpha_{n})-1)\cdot\\
 & \mathbbm{1}((X_{i}-X_{j})^{T}\beta^{*}>0)\mathbbm{1}([(X_{i}-X_{j})_{S^{c}}-\mathcal{M}(X_{i}-X_{j})_{S\setminus 1}]^{T}\beta_{r,\alpha_{n},S^{c}}^*>0)]\\
= & E[(Q(X_{i}^{T}\beta^{*})-Q(X_{j}^{T}\beta^{*}))(2F((X_{i}-X_{j})_{S}^{T}\beta_{r,\alpha_{n},S}^{*}\alpha_{n}+(X_{i}-X_{j})_{S\setminus 1}^{T}\mathcal{M}^T\beta_{r,\alpha_{n},S^c}^*\alpha_{n})-1)\cdot\\
 & \mathbbm{1}((X_{i}-X_{j})^{T}\beta^{*}>0)],
\end{aligned}
\end{equation}
which means the $\beta$ such that $\beta_1=1$, $\beta_{S\setminus 1}=\beta_{r,\alpha_{n},S\setminus 1}^{*}+\mathcal{M}^T\beta_{r,\alpha_{n},S^c}^*$ and $\beta_{S^c}=\Vec{0}$ will let $C_n(\beta)$ no smaller than $C_n(\beta_{r,\alpha_{n}}^{*})$. So without loss of generality we can set $\beta_{r,\alpha_{n},S^{c}}^{*}=\Vec{0}$ since $\beta_{r,\alpha_{n}}^{*}$ maximizes $C_n(\cdot)$.

Finally, define $F(\hat{\Delta})=\hat{L}_{n}(\beta_{r,\alpha_{n}}^{*}+\hat{\Delta})+\lambda_{n}\|\beta_{r,\alpha_{n}}^{*}+\hat{\Delta}\|_{1}-\hat{L}_{n}(\beta_{r,\alpha_{n}}^{*})-\lambda_{n}\|\beta_{r,\alpha_{n}}^{*}\|_{1}$, then

\begin{equation}
\begin{aligned}
F(\hat{\Delta})\geq & C_{n}(\beta_{r,\alpha_{n}}^{*})-C_{n}(\hat{\beta}_{r,\alpha_{n}})-{\displaystyle {\sup_{\hat{\beta}\in \mathcal{H}}}B(\hat{\Delta})+}\lambda_{n}(\|\beta_{r,\alpha_{n}}^{*}+\hat{\Delta}\|_{1}-\|\beta_{r,\alpha_{n}}^{*}\|_{1})\\
= & C(\beta^{*})-C_{n}(\hat{\beta}_{r,\alpha_{n}})-(C(\beta^{*})-C_{n}(\beta_{r,\alpha_{n}}^{*}))-{\displaystyle {\sup_{\hat{\beta}\in \mathcal{H}}}B(\hat{\Delta})+}\lambda_{n}(\|\beta_{r,\alpha_{n}}^{*}+\hat{\Delta}\|_{1}-\|\beta_{r,\alpha_{n}}^{*}\|_{1}).
\end{aligned}
\end{equation}
If $\hat{\beta}_{r,\alpha_{n}}\notin \mathcal{U}_{\frac{\epsilon}{\alpha_{n}}}$, using (\ref{equ:diff_c_beta}), \ref{equ:diff_c_beta2} and the fact that $\beta_{r,\alpha_{n},S^{c}}^{*}=\Vec{0}$,
\begin{equation}
\begin{aligned}F(\hat{\Delta})\geq &  \delta-c_{11}e^{-\varepsilon\alpha_{n}}-{\displaystyle {\sup_{\hat{\beta}\in \mathcal{H}}}B(\hat{\Delta})+}\lambda_{n}(\|\beta_{r,\alpha_{n}}^{*}+\hat{\Delta}\|_{1}-\|\beta_{r,\alpha_{n}}^{*}\|_{1})\\
\geq & \delta-c_{11}e^{-\varepsilon\alpha_{n}}-{\displaystyle {\sup_{\hat{\beta}\in \mathcal{H}}}B(\hat{\Delta})+}\lambda_{n}(\|\beta_{r,\alpha_{n},S}^{*}+\hat{\Delta}_{S}\|_{1}+\|\hat{\Delta}_{S^{c}}\|_{1}-\|\beta_{r,\alpha_{n},S}^{*}\|_{1}),
\end{aligned}
\end{equation}
then by Lemma \ref{lem:stochastic_error_rate},
\begin{equation}
\begin{aligned}
F(\hat{\Delta})\stackrel{p}{\geq} & \delta-c_{11}e^{-\varepsilon\alpha_{n}}+\lambda_{n}(-\|\hat{\Delta}_{S}\|_{1}+\|\hat{\Delta}_{S^{c}}\|_{1})-c_{12}\|\hat{\Delta}\|_{2}\alpha_{n}\sqrt{log(d)s^{2}/n}\\
\geq & -\lambda_{n}\|\hat{\Delta}_{S}\|_{1}+\delta-c_{11}e^{-\varepsilon\alpha_{n}}-c_{12}\|\hat{\Delta}\|_{2}\alpha_{n}\sqrt{log(d)s^{2}/n}\\
\geq & -\lambda_{n}\|\hat{\Delta}\|_{2}\sqrt{s}+\delta-c_{11}e^{-\varepsilon\alpha_{n}}-c_{12}\|\hat{\Delta}\|_{2}\alpha_{n}\sqrt{log(d)s^{2}/n}.
\end{aligned}
\end{equation}
Choose $\lambda_{n}=\frac{\delta}{3r\sqrt{s}}$ and $\alpha_{n}=max(1,\frac{log(c_{11})-log(\delta)+log(3)}{\varepsilon})$,
then this satisfies $\alpha_{n}>\frac{log(c_{11})-log(\delta)}{\varepsilon}$
and $\lambda_{n}\gtrsim\alpha_{n}\sqrt{\frac{log(d)}{n}}$. So 
\begin{equation}
F(\hat{\Delta})\stackrel{p}{\geq}\delta/3-c_{12}\|\hat{\Delta}\|_{2}\alpha_{n}\sqrt{log(d)s^{2}/n}\geq\delta/3-max(1,\frac{log(3c_{11}/\delta)}{\varepsilon})c_{12}r\sqrt{log(d)s^{2}/n}
\end{equation}
If $log(d)s^{2}/n\longrightarrow0$, then $F(\hat{\Delta})\stackrel{p}{>}0$,
which contradicts with the fact that $\hat{\beta}_{r,\alpha_{n}}$
minimizes $\hat{L}_{n}(\beta)+\lambda_{n}P(\beta)$ in $\mathcal{H}$.
Hence $P(\hat{\beta}_{r,\alpha_{n}}\in \mathcal{U}_{\frac{\epsilon}{\alpha_{n}}})\longrightarrow1$.

Now let's consider $\hat{\beta}_{r,\alpha_{n},S^{c}}$. Truncate the $S^c$ part of $\hat{\beta}_{r,\alpha_{n}}$ to be 0 and call this new vector as $\widetilde{\hat{\beta}}_{r,\alpha_{n}}$.
By definition, $\hat{L}_{n}(\hat{\beta}_{r,\alpha_{n}})+\lambda_{n}\|\hat{\beta}_{r,\alpha_{n}}\|_{1}$$\leq\hat{L}_{n}(\widetilde{\hat{\beta}}_{r,\alpha_{n}})+\lambda_{n}\|\widetilde{\hat{\beta}}_{r,\alpha_{n}}\|_{1}$,
so $\lambda_{n}\|\hat{\beta}_{r,\alpha_{n},S^{c}}\|_{1}$ $\leq\hat{L}_{n}(\widetilde{\hat{\beta}}_{r,\alpha_{n}})$
$-\hat{L}_{n}(\hat{\beta}_{r,\alpha_{n}})$. Similar to (\ref{equ:symmetric}), we know $L_{n}(\widetilde{\hat{\beta}}_{r,\alpha_{n}})-L_{n}(\hat{\beta}_{r,\alpha_{n}})\leq0$,
so
\begin{equation}
\lambda_{n}\|\hat{\beta}_{r,\alpha_{n},S^{c}}\|_{1}\leq B(\hat{\beta}_{r,\alpha_{n}}-\widetilde{\hat{\beta}}_{r,\alpha_{n}})\stackrel{p}{\lesssim}(\frac{s^{2}\log(d)}{n})^{1/2},
\end{equation}
and therefore $\|\hat{\beta}_{r,\alpha_{n},S^{c}}\|_{1}\stackrel{p}{\lesssim}(\frac{s^{3}\log(d)}{n})^{1/2}$.
\end{proof}

\bibliographystyle{apalike}

\bibliography{mybib}

\end{document}